\newtheorem{theorem}{Theorem}
\newtheorem{fact}{Fact}
\newtheorem{definition}{Definition}
\newtheorem{remark}{Remark}
\newtheorem{example}{Example}
\newtheorem{lemma}{Lemma}
\newtheorem{observation}{Observation}
\newlength\min@xx
\newcommand*\xxrightarrow[1]{\begingroup
  \settowidth\min@xx{$\m@th\scriptstyle#1$}
  \@xxrightarrow}
\newcommand*\@xxrightarrow[2][]{
  \sbox8{$\m@th\scriptstyle#1$}  
  \ifdim\wd8>\min@xx \min@xx=\wd8 \fi
  \sbox8{$\m@th\scriptstyle#2$} 
  \ifdim\wd8>\min@xx \min@xx=\wd8 \fi
  \xrightarrow[{\mathmakebox[\min@xx]{\scriptstyle#1}}]
    {\mathmakebox[\min@xx]{\scriptstyle#2}}
  \endgroup}
\newcommand*\xxleftarrow[1]{\begingroup
  \settowidth\min@xx{$\m@th\scriptstyle#1$}
  \@xxleftarrow}
\newcommand*\@xxleftarrow[2][]{
  \sbox8{$\m@th\scriptstyle#1$}  
  \ifdim\wd8>\min@xx \min@xx=\wd8 \fi
  \sbox8{$\m@th\scriptstyle#2$} 
  \ifdim\wd8>\min@xx \min@xx=\wd8 \fi
  \xleftarrow[{\mathmakebox[\min@xx]{\scriptstyle#1}}]
    {\mathmakebox[\min@xx]{\scriptstyle#2}}
  \endgroup}
\begin{document}


\title{Towards Extending Noiseless Privacy - Dependent Data and More Practical Approach}
\titlenote{First author supported by NCN Research Grant DEC-2013/10/E/ST1/00359. This paper is final version of preliminary paper 'Privacy of Aggregated Data without Noise' and was published on AsiaCCS 2017.}



\numberofauthors{2}
\author{
\alignauthor
Krzysztof Grining \\
       \affaddr{Wroclaw University of Science and Technology}\\
       \affaddr{Faculty of Fundamental Problems of Technology}\\
       \affaddr{Department of Computer Science}\\
       \email{krzysztof.grining@pwr.edu.pl}
\alignauthor
Marek Klonowski \\
       \affaddr{Wroclaw University of Science and Technology}\\
       \affaddr{Faculty of Fundamental Problems of Technology}\\
       \affaddr{Department of Computer Science}\\
       \email{marek.klonowski@pwr.edu.pl}
}

\maketitle

\begin{abstract}


In 2011 Bhaskar et al. pointed out that in many cases one can ensure sufficient level of privacy without 
adding noise by utilizing adversarial uncertainty. Informally speaking, this observation comes from the fact that
if at least a part of the data is randomized from the adversary's
point of view, it can be effectively used for hiding other
values.

So far the approach to that idea in the literature was mostly purely asymptotic, which greatly limited its adaptation in real-life scenarios. In this paper we aim to make the concept of utilizing adversarial uncertainty not only an interesting theoretical idea, but rather a practically useful technique, complementary to differential privacy, which is the state-of-the-art definition of privacy. This requires non-asymptotic privacy guarantees, more realistic approach to the randomness inherently present in the data and to the adversary's knowledge.

In our paper we extend the concept proposed by Bhaskar et al. and present some results for wider class of 
data. In particular we cover the data sets that are dependent. We also introduce rigorous adversarial model. Moreover,
 in contrast to most of previous papers in this field, we give detailed (non-asymptotic) results which is motivated by practical reasons.
Note that it required a modified approach and more subtle mathematical tools, including Stein method which, to the best of our knowledge, was not used in privacy research before. 

Apart from that, we show how to combine adversarial uncertainty with differential privacy approach and explore synergy between them to enhance the privacy parameters already present in the data itself by adding small amount of noise.

\keywords{data aggregation, differential privacy, distributed system }
\end{abstract}

 \section{Introduction}\label{sect:intro}

Let us imagine a following problem. There is a set of users and each of them keeps a single value. Analogously, we can think about a database with $n$ records, each corresponding to a specific user. We have to reveal some aggregated statistic (say, the sum of all single values) and preserve the privacy of individuals (say, modeled using standard \textit{differential privacy}  notion). In recent years there have been many very promising results, both for the case where the privacy is governed by a trusted authority (database curator) and for the case where the database is distributed (see for example \cite{PaniShi} and \cite{Rastogi} where the authors use combination of cryptography and privacy preserving techniques). However, the standard differential privacy has an obvious drawback which is a necessity of adding a carefully calibrated noise to the final answer to the query. This approach is not always satisfactory. In some cases we may need to have exact aggregated statistic. 
Moreover, as pointed in some recent papers, adding noise may lead to significant errors in aggregated statistic.
Even if having noisy response is acceptable for a given scenario, the resulting statistics may be too far from the exact values to be usable in practice (see \cite{NaszeACNS,MKAP}). Finally, adding noise, specifically from a non-standard distribution, can be technically problematic -- especially when the aggregated data may come from small, computationally  constrained devices. These facts lead to a somewhat reluctant adaptation of the differential privacy notion in real life applications, despite its undeniable merits.

One may ask if it is possible to circumvent the problem of adding noise while preserving the differential privacy of users. Unfortunately, in the paradigm of standard differential privacy, adding noise is inevitable. Moreover, if we assume that users operate independently and cannot cooperate on adding randomized values used to perturb the original data (which is often the case in distributed systems), the size of aggregated noise has to be $\Omega(\sqrt{n})$, where $n$ is the number of users (as proved in \cite{HubercikBound}).
 
On the other hand, observing some real-life applications of data aggregation one can have an intuition that often it is safe to release aggregated data without adding noise and such act does not expose any individuals' privacy, as pointed out in the seminal paper \cite{prevDP2}. One of classic examples is the average national income. It is clear that such an information says in practice nothing significant about the specific incomes of any of our neighbors, even though they took part in the survey. 
Even revealing the average income of employees in a big company should be secure in terms of privacy of individuals. In contrast, revealing the exact  average income (or maximum income)  in a small community exposes users to obvious risk of privacy breach. 

These intuitions have already been considered in a few papers, namely \cite{prevDP1,prevDP2,prevDP3} to mention the most significant ones, where the authors propose relaxations of the differential privacy model which ''utilizes'' the randomness inherently present in the data itself. Our work can be seen as a continuation and extension of the line of research where the authors leverage adversarial uncertainty. However, in contrast to previous results we focused on detailed, non-asymptotic analysis of the relaxed model, which is motivated by practical needs. Note also, that in the regime of adversarial uncertainty one has to take into account the randomness inherently present in the data, especially the dependencies which naturally appear in real-life scenarios. Therefore, we concentrate also on (locally) \textbf{dependent} data, which importance we justify in Section~\ref{ssect:DEP}. This required using different mathematical tools (e.g. Stein method, see~\cite{fundStein}). To the best of authors knowledge, that type of technical approach was not used previously in the privacy preserving context. Possibly due to the fact that so far the dependent data was not considered in a wide sense in previous papers concerning utilizing adversarial uncertainty.

The intuition behind the \textit{noiseless privacy} approach is that in real life scenarios it might be too pessimistic to assume that the adversary knows almost every record in the database. This assumption seems far too strong, yet it stands at the heart of standard differential privacy. Indeed, it is hard to expect that revealing the exact average worldwide income would in any way harm privacy of any single individual. However, according to differential privacy definition, that would be unacceptable. Intuitively we realize that if an average income (or other value) of a ''large'' set of participants is revealed, there should not be a privacy breach. The authors of \cite{prevDP2} and their notion of noiseless privacy capture that intuition. Their approach allows database designer to check whether the data satisfies desired privacy parameters, and if it does, just reveal the aggregated value without adding any noise. Unfortunately, their results are mostly only asymptotic which makes it hard to use in practice, due to unknown constants which may hide the real size of privacy parameters. Using our methods we give \textbf{explicit bounds} for privacy parameters. From practitioner's point of view, this allows to construct efficient algorithms by directly using our results. Moreover, for the few non-asymptotic results in~\cite{prevDP2} we show that our methods give better bound for privacy parameters. Despite the merits (and theoretical importance) of leveraging adversarial uncertainty, for this approach to become a state-of-the-art privacy promise for various kind of data aggregation problems, it has to be easy to use and quantify for practitioners. Showing precise bounds for privacy parameters and also considering dependent data is the way to make noiseless privacy more useful in practice, which is the purpose of our paper. 

To the best of our knowledge, so far in the privacy literature the idea of combining standard differential privacy techniques (i.e. Laplace mechanism, see \cite{DworkAlgo}) with adversarial uncertainty. Intuitively we can think that in the case where the data has much randomness, we should be able to add smaller noise than in the case where the data is deterministic from the adversary's perspective. Due to our novel approach, we give explicit bounds for privacy parameters which allows us to explore the synergy between differential privacy methods and noiseless privacy approach. We describe and analyse this synergy in Section~\ref{ssect:AddNoise}.


In our paper we follow the model from \cite{prevDP2}, yet present it in a more convenient way for our approach. We show that this definition is coherent with classic (computational) differential privacy -- formally speaking it is an extension. This approach can be seen as utilizing ``uncertainty`` that naturally appears in some data sources to hide the contributions of individuals in the aggregated outcome.   
We depict wide classes of data that can be handled without adding noise and also give the explicit privacy parameters instead of only asymptotic results. Due to explicitly given parameters, our theorems can be seen as ''off the shelf'' ways for a practitioner to check whether he can safely release the data without any noise or not.

\subsection{Our results and organization of this paper}

Our contribution is as follows:
\begin{itemize}
 \item We extend the paradigm of utilizing adversarial uncertainty for the case of dependent data (Theorems~\ref{THMDep} and~\ref{THMAux2}).
\item We explore the synergy between standard differential privacy methods and noiseless privacy approach (Theorem~\ref{THMDosypywanie}).
 \item We propose an adversarial model (Subsection~\ref{ssect:AdvModel}) and explicit procedure for preserving privacy (Figure~\ref{fig:flowchart}), which is easy to use for practitioners.
\item We give improved and explicit (non-asymptotic) bounds for the privacy parameters (Theorems~\ref{ThmInd} and~\ref{THMAux1}).
 \end{itemize}

We believe that this contribution is a step towards more practical constructions of privacy protocols which utilize adversarial uncertainty. Note that, for the first time, we consider wide class of dependent data. Moreover, our results state that the party responsible for privacy does not need to know neither the exact structure of dependencies nor the exact distribution of the data. Upper bounds for the size of the greatest dependent subset and the sum of centralised third moments (or fourth in case of dependent data) are sufficient to use our results in practice. To achieve it, we used different methods than were used in context of adversarial uncertainty before.


The rest of this paper is organized as follows. In Section~\ref{sect:model} we explain the motivations, recall the idea of utilizing adversarial uncertainty from \cite{prevDP2} in a way that is more convenient for presenting our results and provide some formalism that can be seen as an extension of differential privacy notion. We also introduce and discuss our adversarial model and some possible applications.
In the next Sections we present our results. In Section~\ref{ssect:IND}  we focus on the case when 
from the adversary's perspective the aggregated data is a set of independent random values.  Most important is the case discussed in Section~\ref{ssect:DEP}, where we allow the adversary to know \textit{a~priori} some dependencies between data. Note however, that the data owner do not have to know the exact dependencies in the data. Then in Section~\ref{ssect:Aux} we discuss situation where the adversary has an exact knowledge of the values of some subset of data values. Finally in Section~\ref{ssect:AddNoise} we explore the idea of combining adversarial uncertainty with standard differential privacy approach. 

In our paper  we consider privacy guarantees for any fixed size of data, since purely asymptotic approach  seems to be inadequate 
for typical areas of application. Let us stress that we present formulas that can be used for deciding if revealing aggregated data from a given types of data 
is secure even for a moderate number of users. At the end in Section~\ref{sect:prev} we recall some previous and related work. We conclude and outline the future work in Section~\ref{sect:conclusion}. 
Since our paper is quite technical, for the sake of clarity of presentation some of proofs and discussions about the extended definition of privacy have been moved to the Appendix.


 \section{Model}\label{sect:model}

As mentioned in the introduction, the main goal of this paper is to make the idea of noiseless privacy (from~\cite{prevDP2}) not only an interesting, theoretical concept, but a practically useful way to guarantee some level of privacy. We want to emphasize that we use the idea (noiseless privacy) from \cite{prevDP2}, yet we present the privacy model in a slightly different way, which seems to be simpler and more convenient for our approach. Moreover it shows direct descendance from classical differential privacy (as presented in~\cite{DworkAlgo}) which may be considered as a special case of the discussed model.

Let us present the aggregation problem in a general way. In the system there are $n$ \textit{users} that may represent different types of parties (organizations, individuals or even sensing devices). Each of them holds a data record $x_i$ (for simplicity we assume that it is a single value). The goal is to aggregate the data and reveal some statistics (say, sum of the values). Note that the database may either be a centralized one, that means there is a database curator whose goal is to reveal the values in a private way (namely via adding some noise to the output), or a distributed one. See that in terms of privacy definition, both these cases are equivalent. They differ in algorithmic approach to these problems. As this paper is about privacy (specifically about utilizing adversarial uncertainty), both these cases are essentially the same for us. Therefore by saying \textit{data} we will mean the set of $n$ values (held either by different parties or by a single curator) which we want to aggregate (i.e. compute the sum of these values) and reveal the obtained statistic to the public. By saying \textit{compromised users} we will mean the subset of data about which the adversary has full knowledge, namely he knows the exact values in this subset. By saying \textit{data owner} we will mean a party that is responsible for preserving privacy of the data by designing an appropriate algorithm, choosing adversarial model parameters (or upper bounds for them) or deciding whether specific privacy parameters are sufficient or if they have to be combined with external noise.
 

\subsection{Modeling privacy of randomized data}
We use a privacy model in which the data (or at least part of it) is considered random from the adversary's perspective, coming from a specific distribution. This kind of approach is quite natural in many scenarios, namely the knowledge of the adversary is usually limited. This ``uncertainty'' can be  utilized.  However, it needs a different definition of privacy than standard differential privacy as in~\cite{DworkAlgo}, because we have to take into account randomized inputs. Following the notion introduced in \cite{prevDP2} we will call this approach \textit{noiseless privacy}. Before we show its formal definition, we need to introduce a following 

\begin{definition}[Adjacent Random Vectors]\label{adjVec}
Let $X = (X_1,\ldots,X_n)$ be an arbitrary random vector and let $X'$ be other random vector. We will say that vectors $X$ and $X'$ are adjacent if and only if 
$$
X' = (X_1,\ldots,X_{i},X_{n+1},X_{i+1},\ldots,X_n),
$$ 
or 
$$
X' = (X_1,\ldots,X_{i-1},X_{i+1},\ldots,X_n),
$$ 
for any $i \in \{1,\ldots,n\}$.
\end{definition}
This essentially captures the notion of data vectors adjacency similar to the one in~\cite{DworkAlgo}, but for random variables rather than deterministic values. See also that if for some deterministic vector $x$ we have $X = x$ with probability $1$, then this definition of adjacency is the same as in~\cite{DworkAlgo}. Note that (as in standard adjacency definition in~\cite{DworkAlgo}) we could as well define adjacency in such a way that instead of adding or removing a vector element, we could simply change its value, this is just the matter of choice and a few straightforward technical changes in proofs. Continuing, we can introduce a following

\begin{definition}[Data sensitivity]
We will say that data vector $X = (X_1,\ldots,X_n)$ and mechanism $M$ have data sensitivity $\Delta$ if an only if
$$
|M(X)-M(X')| \leqslant \Delta,
$$
for every vector $X'$ that is adjacent to $X$.
\end{definition}

Note that this bears close resemblance to the $l_1$-sensitivity defined in~\cite{DworkAlgo}. More detailed comparison of noiseless privacy and standard differential privacy can be found in Appendix.

We can formally define noiseless privacy in the following way
\begin{definition}[Noiseless Privacy]\label{SDP-DEF}
We say that a \\ privacy mechanism $M$ and a random vector $X=(X_1,\ldots,X_n)$ preserve noiseless privacy with parameters $(\epsilon,\delta)$ if for any random vector $X'$ such that $X$ and $X'$ are adjacent we have 
$$
\forall_{B\in \mathcal{B}} P(M(X)\in B) \leqslant e^{\epsilon} P(M(X')\in B) + \delta.
$$
\end{definition}
Intuitively, this definition says that if data can be considered random, then the outcome of the coin flip of any single user does not significantly change the result of \textbf{deterministic} mechanism $M$, whether the user is added to the result, or removed from it. This is very similar to standard differential privacy. A more detailed comparison is moved to the Appendix. Throughout this paper we will use abbreviation $(\epsilon,\delta)$-NP (as in \cite{prevDP2}) to denote noiseless privacy with parameters $\epsilon$ and $\delta$.

Clearly, this model of privacy is a coherent extension of differential privacy. We see it as a generalization of the known differential privacy definition that can be useful for some real life scenarios. See that in Rem.\ref{sect:rem1} (Appendix)  we explained that this model is indeed more general than differential privacy, but if we fix the data as deterministic, it is essentially the same definition. Moreover, in Section~\ref{ssect:AddNoise} we show how the standard differential privacy methods can be combined with noiseless privacy approach.

Whether or not (and to what extent) particular data can be considered random is of course an important problem to be solved by the data holder, and is beyond the scope of this paper. Note that also other papers in this line of research has not yet dealt with this problem which may be a very interesting question for a future work.

See that in noiseless privacy, random data has natural self-hiding properties, even though the mechanisms are deterministic. Instead of relying on the randomness of mechanism (as in the standard differential privacy methods), we can sometimes rely on the inherent randomness of the data itself. Deterministic algorithms have an obvious benefit of not introducing any errors (which are inevitable in standard differential privacy approach due to the addition of noise), so the answer to a query is exact. 

The most common and useful deterministic mechanism would be simply summing all the data. In our paper we explore the privacy parameters of mechanism $M(X) = sum(X)$ for any distribution of the data vector $X$, a wide class of dependencies in the data and the adversarial model defined in Subsection~\ref{ssect:AdvModel}. 

\subsection{Adversarial Model}\label{ssect:AdvModel}

We assume that the adversary:

\begin{itemize}
\item May know the exact data of at most some fraction $\gamma$ of the users.
\item May know the correct distribution (but not the value itself) of the data of the rest of users (note that the distribution for each user might be different).
\item May know the dependencies between some of the data values (if there are any), but only in subsets of size at most $D$.
\end{itemize}

Let us now discuss and justify these assumptions. First of all, one can easily see that in standard differential privacy we essentially assume that the adversary knows the exact data of all users except one. Here we relax this by giving an upper bound on the number of users which are compromised. See that in realistic scenarios it is not very plausible that the adversary indeed knows almost every data record. On the other hand, we still give him quite a lot of power, namely we assume that he knows the distributions of the data, but not the exact values. From the point of view of the adversary, data is a vector of (at least $n-\gamma n$) random variables with known distribution and some known (at most $\gamma n$) data values. See that in sections~\ref{ssect:IND} and~\ref{ssect:DEP} we assume for simplicity that the adversary does not know any exact values (so $\gamma = 0$). We discuss this in Section~\ref{ssect:Aux} where we show how to extend our results for the case where the adversary knows any arbitrary $\gamma n$ exact values.

In real-life data it is quite common to have some dependencies involved. Moreover, the adversary might know about them. To propose a realistic model for noiseless privacy, one has to take it into account. In our model we give the adversary the precise knowledge about all dependencies in subsets of size at most $D$. That essentially means that he does not have an insight into dependencies of subsets of size greater than $D$. Note that it might be the case that such dependencies do not exists (namely the data might really have all dependent subsets of size at most $D$), or simply the adversary does not know about these dependencies and cannot therefore utilize them. Obviously in standard differential privacy notion we do not care about the distribution of data, whether it is dependent or not and so on, which is much easier to comprehend in practical applications. Here, on the other hand, due to the necessity of utilizing the inherent randomness in data instead of adding external noises, we must take such things into account.

See that the is asymmetry between the adversary and other users and even the data owner. Namely we give the adversary power of knowing the exact structure of dependencies (of size at most $D$), while neither users nor the data owner have to know this structure. The necessary parameter to use our results is the upper bound for $D$. Note that the data owner might do some tests for independence of the data (or subsets of the data), i.e. using $\chi^2$-test or other well known statistical methods for testing independence. Information about the upper bound for the size of dependent subsets might also come from strictly engineering knowledge, say due to physical proximity of the subset of sensors or some social knowledge, say subset of users having the same age. This approach to dependencies essentially boils down to the known notion of \textit{dependency neighborhoods} defined as below

\begin{definition}
A collection of random variables $X_1,\ldots,X_n$ has dependency neighborhoods $N_i \subset \{1,\ldots,n\}$, $i \in \{1,\ldots,n\}$ if $i \in N_i$ and $X_i$ is independent of $\{X_j\}_{j\notin N_i}$.
\end{definition}

Observe that the definition of dependency neighborhoods actually says that for specific $X_i$ we know that it is independent of those that are not in its neighborhood. We want to give a general approach to local dependencies scenario, so we do not assume  anything about joint distributions of the dependent subsets. Note that in \cite{prevDP2} the authors gave results for dependent data only for the simplest case of boolean (true/false) data and queries, that is for queries $f$ such that $f: \{0,1\}^n \rightarrow \{0,1\}$. They did not discuss dependencies for more complicated queries and data types. Here, on the other hand, we aim to give a non-asymptotic formula for privacy parameters for \textbf{any distribution} of data and a \textbf{sum query} under dependency regime.

To sum it up, we present a formal definition of adversarial model.

\begin{definition}\label{DefModel}
We will denote a specific instantiation of adversarial model for data vector $X$ by $Adv_X(D,\gamma)$, where
\begin{itemize}
\item $D$ is upper bound for the size of the greatest dependent subset,
\item $\gamma$ is the upper bound for the fraction of the data which values the adversary exactly knows,
\item adversary knows the distribution of data vector $X$.
\end{itemize}
\end{definition}

We believe that while our adversarial model give significantly less power to the adversary than in standard differential privacy notion (which basically gives the adversary almost full knowledge of the data), they still are reasonable and applicable in real-life scenarios. One important remark is that we \textbf{do not} need to predict the exact adversaries knowledge about the dependencies. We only need to know the maximum size of dependency neighborhood, namely the size of largest non-independent subset of data. In fact, we only need an \textbf{upper bound} for that size. Same with the fraction of data values which the adversary knows. To apply our results, which are presented in the next sections, one will also need a lower bound for the variance of data and upper bound for the sum of third and fourth centralized moments for the specific data vector.

\subsection{Applications}

\begin{itemize}
 \item In the case of distributed systems, the users themselves have to secure their privacy using both cryptography and privacy preserving techniques (see for example \cite{PaniShi,Hubercik}). The notion of noiseless privacy and our bounds for privacy parameters are useful especially in distributed case for two reasons. First, in distributed systems quite often the noises which have to be added by users render the data practically useless (too much disturbance). Second, in such systems it is more common to assume that the adversary does not have full knowledge, i.e. can know only at most some fraction of the data. Note also that this paper is solely about privacy and we focus on showing that there are certain data types which does not need any noise added to the final output whether it is a centralized or a distributed case. More details on specific applications for distributed data aggregation can be found in \cite{PaniShi}. See that, if the noiseless privacy assumptions are met and the privacy parameters are satisfying, one could for example run protocols from \cite{PaniShi, Hubercik} with only the cryptographic part, without adding noises to the values. The noises added in standard approach turn out to be quite too big for practical applications in various scenarios (see \cite{NaszeACNS}). 
 \item The idea of noiseless privacy can be used for a wide range of applications including 
networks of sensing environmental parameters, smart metering (e.g, electricity), clinical research, population 
monitoring or cloud services. Most important is however that in all these areas  there are 
natural cases, where we can make some assumptions about the adversaries knowledge.
  \item  Imagine a situation where we have a cloud service which holds shopping preferences of its users. The data is distributed amongst many servers which are completely separated from each other. We assume that at most some (say, 50 percent) of these servers became compromised, which means that at most 50 percent of the values are known to the adversary. Assume that he somehow knows the distribution of the rest (this means that he still has a lot of knowledge about the rest of data) and even some dependencies due to geographical or other reasons. We might know that the greatest dependent subset of our data has size at most $D$ (due to independence tests). This is our model ($Adv_X(D,\gamma)$) for known (or at least upper bounded) $\gamma$, $D$ and distributions of the rest of the data. 
\end{itemize}

 \section{Explicit Bounds for \\ Independent Data}\label{ssect:IND}

Assume that we have a database $X$ which consists of $n$ values so $X = \{X_1,\ldots,X_n\}$. Recall that i.i.d. means independent, identically distributed. Let us consider a simple, warm-up scenario, where $X_i$ are i.i.d. random variables and $X_i \sim Bin(1,p)$. We want to aggregate the sum of all these variables so we set $M(X) = \sum_{i=1}^n{X_i} \sim Bin(n,p)$. 

Now we can state a theorem which shows that i.i.d. binomial data has very strong noiseless privacy properties for a wide range of parameters. First we consider the case where $\delta$ is fixed and obtain $\epsilon$ so that the data with summing mechanism is $(\epsilon,\delta)$-NP. Then we fix $\epsilon$ and calculate $\delta$. 
Both cases are considered in the following
\begin{theorem}\label{THMBin}
Let $X=(X_1,\ldots,X_n)$ be a data vector where $X_i \sim Bin(1,p)$ are i.i.d. random variables. If we use mechanism $M(X) =  \sum_{i=1}^{n}(X_i)$ and fix $\delta \geqslant P(M(X)=0)+P(M(X)=n)$, we obtain that it is $(\epsilon,\delta)$-NP for the following
$$
\epsilon =  \begin{cases}
\sqrt{\frac{\ln\left(\frac{2}{\delta}\right)}{2n}}\left(\frac{1}{1-p} - \frac{1}{\sqrt{\frac{\ln\left(\frac{2}{\delta}\right)}{2n}}-p}\right), p \leqslant \frac{1}{2}, \\
\sqrt{\frac{\ln\left(\frac{2}{\delta}\right)}{2n}}\left(\frac{1}{p} - \frac{1}{\sqrt{\frac{\ln\left(\frac{2}{\delta}\right)}{2n}}-(1-p)}\right), p > \frac{1}{2}.
\end{cases}
$$
On the other hand, if $\epsilon > 0$ is fixed, we get
$$
\delta =  \begin{cases}
2\exp\left(-2np^2 \left(\frac{e^{\epsilon} - 1}{e^{\epsilon} + \frac{p}{1-p}} \right)^2\right), p \leqslant \frac{1}{2}, \\
2\exp\left(-2n(1-p)^2 \left(\frac{e^{\epsilon} - 1}{e^{\epsilon} + \frac{1-p}{p}} \right)^2\right), p > \frac{1}{2}.
\end{cases}
$$
\end{theorem}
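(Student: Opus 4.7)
The mechanism $M(X)=\sum_i X_i$ takes integer values, so every measurable $B$ reduces to a subset of $\{0,1,\dots,n\}$, and the NP condition is equivalent to a family of pointwise statements. The strategy is standard for discrete mechanisms: partition the outcome space into a ``good'' set $G$ on which the ratio $P(M(X)=k)/P(M(X')=k)$ is bounded by $e^{\epsilon}$, and a ``bad'' set whose mass under $M(X)$ is absorbed into $\delta$. Concretely, if we manage to show that for some $G$,
\[
\frac{P(M(X)=k)}{P(M(X')=k)}\le e^{\epsilon}\quad\text{for }k\in G,\qquad P(M(X)\in G^{c})\le\delta,
\]
then for every $B$ one has $P(M(X)\in B)\le e^{\epsilon}P(M(X')\in B)+\delta$, which is exactly Definition~\ref{SDP-DEF}. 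Adjacency gives two sub-cases according to Definition~\ref{adjVec}: either $M(X')\sim\mathrm{Bin}(n+1,p)$ or $M(X')\sim\mathrm{Bin}(n-1,p)$, and both must be handled.

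\textbf{Ratio computation.} A direct calculation using $\binom{n}{k}/\binom{n-1}{k}=n/(n-k)$ yields the closed form
\[
\frac{P(\mathrm{Bin}(n,p)=k)}{P(\mathrm{Bin}(n-1,p)=k)}=\frac{n(1-p)}{n-k}\qquad (k<n),
\]
which is monotone in $k$ and blows up at $k=n$; the symmetric remove/add case gives the reciprocal expression, monotone in the other direction and blowing up at $k=0$. The extremal values $k=0$ and $k=n$ are precisely the points where one of the two pmfs vanishes, so the condition $\delta\ge P(M(X)=0)+P(M(X)=n)$ is exactly what is needed to absorb them into the slack. Away from the endpoints, imposing ratio $\le e^{\epsilon}$ gives a threshold of the form $k^{*}=np\pm t$, and the bad set is a one-sided tail $\{k>k^{*}\}$ (resp.\ $\{k<k^{*}\}$) of $\mathrm{Bin}(n,p)$.

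\textbf{Tail bound and inversion.} For the tail probability I plan to use Hoeffding's inequality, $P(|S_{n}/n-p|\ge s)\le 2e^{-2ns^{2}}$, since $X_{i}\in[0,1]$ are i.i.d. Setting the Hoeffding bound equal to $\delta$ gives $s=\sqrt{\ln(2/\delta)/(2n)}$, which is the recurring quantity in the statement. The algebraic threshold coming from the ratio analysis and the probabilistic threshold coming from Hoeffding must agree, which produces a single scalar equation between $\epsilon$, $\delta$, $p$, $n$. Solving this equation for $\epsilon$ (with $\delta$ fixed) gives the first displayed expression, where the curious subtracted term $1/(s-p)$ is the residue of inverting a linear-fractional relation in $e^{\epsilon}$; solving for $\delta$ (with $\epsilon$ fixed) rearranges the same equation into the Hoeffding exponent $-2np^{2}\bigl((e^{\epsilon}-1)/(e^{\epsilon}+p/(1-p))\bigr)^{2}$. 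The case $p>1/2$ I would reduce to $p\le 1/2$ by the substitution $X_{i}\mapsto 1-X_{i}$, which sends $\mathrm{Bin}(n,p)$ to $\mathrm{Bin}(n,1-p)$, preserves the adjacency relation, and swaps the roles of the left and right tails; this is exactly the replacement $p\leftrightarrow 1-p$ that distinguishes the two cases in the statement.

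\textbf{Expected difficulty.} The mechanical parts (ratio formula, Hoeffding, the $p\leftrightarrow 1-p$ symmetry) are routine. The genuine technical point is bookkeeping: there are two adjacency types (add / remove), two directions of the NP inequality, and two boundary values ($k=0,n$), so one must be careful that the single $\delta$ budget simultaneously covers all of them. In particular, the threshold $k^{*}$ has to be chosen at the point where the dominant tail of $\mathrm{Bin}(n,p)$ meets the dominant ratio-blowup, so that the remaining boundary mass is precisely $P(M(X)=0)+P(M(X)=n)$, which is the hypothesis on $\delta$; getting the algebra to collapse into the clean closed forms written in the theorem---rather than some equivalent but uglier rearrangement---is the step requiring the most care.
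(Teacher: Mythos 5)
There is a genuine gap, and it sits exactly where you flagged the difficulty: the claim that ``the algebra collapses into the clean closed forms written in the theorem.'' The paper's proof does not work with the cross-size ratio $P(\mathrm{Bin}(n,p)=k)/P(\mathrm{Bin}(n-1,p)=k)=n(1-p)/(n-k)$ that you compute; it works with the ratio of \emph{consecutive} pmf values of the \emph{same} binomial, $P(\mathrm{Bin}(n,p)=u)/P(\mathrm{Bin}(n,p)=u\pm 1)=\frac{n-u}{u}\cdot\frac{p}{1-p}$ (combined with the observation that the mechanism has sensitivity $1$, so adjacency shifts the output by at most one). These are different rational functions of $k$, and they produce different thresholds: your ratio gives a bad set in the \emph{upper} tail at distance $n(1-p)(1-e^{-\epsilon})$ from the mean and hence a Hoeffding exponent of the form $-2n(1-p)^{2}(1-e^{-\epsilon})^{2}$, whereas the stated theorem has $-2np^{2}\bigl((e^{\epsilon}-1)/(e^{\epsilon}+\tfrac{p}{1-p})\bigr)^{2}$, which comes from the consecutive-ratio threshold $k<\frac{np}{e^{\epsilon}(1-p)+p}$ in the \emph{lower} tail (for $p\le\tfrac12$). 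Already at $p=\tfrac12$ and small $\epsilon$ the two exponents differ by a constant factor ($\epsilon^{2}/4$ versus $\epsilon^{2}/16$), so your route cannot reproduce the stated $\delta$. Likewise, the term $1/(s-p)$ in the stated $\epsilon$ is not ``the residue of inverting a linear-fractional relation in $e^{\epsilon}$''; it arises from bounding $-\log\bigl(1-\tfrac{\lambda}{np}\bigr)\le\frac{\lambda}{np-\lambda}$ when taking the logarithm of the consecutive ratio at $u=\lceil np-\lambda\rceil$.

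Your high-level architecture (good-set/bad-set decomposition, Hoeffding with $s=\sqrt{\ln(2/\delta)/(2n)}$, absorbing $k=0,n$ into $\delta$, and the $p\leftrightarrow 1-p$ reduction) matches the paper's, and a proof along your lines would likely yield \emph{some} valid $(\epsilon,\delta)$-NP guarantee. But because you leave the inversion step as an assertion, and because the ratio you chose demonstrably does not yield the displayed formulas, the proposal does not establish the theorem as stated. To repair it, replace the cross-size ratio by the consecutive-value ratio of $\mathrm{Bin}(n,p)$, justify the reduction from the adjacency in Definition~\ref{adjVec} to that ratio via the sensitivity-$1$ shift, and carry the $-\log(1-x)\le x/(1-x)$ bound through explicitly.
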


Proof of this Theorem is quite long and laborious, albeit not very complicated, as it mostly consists of straightforward observations and application of Chernoff bounds. Due to space limitations and mathematical technicalities, the proof has been moved to Appendix.

Let us observe that in Theorem~\ref{THMBin} for constant parameters $p$ and $\delta$ we get  $\epsilon = O\left(\frac{1}{\sqrt{n}}\right)$. It is also worth noting that for $p$ close to $\frac{\lambda}{n}$ or $1-\frac{\lambda}{n}$, $\epsilon$ can be  large, although as long as $p$ is constant, $\epsilon$ still approaches $0$ with $n \rightarrow \infty$.

Similarly,  for $p$ very close to $0$ or $1$ and for small $n$, the value of $\delta$ can be large. Nevertheless  we see that $\delta$ is decreasing \textbf{exponentially} to $0$ with $n \rightarrow \infty$, so for sufficiently large $n$ we still get very small values of $\delta$, even if $p$ was strongly biased. 

One can easily see that this theorem is essentially equivalent to Theorem 5 in~\cite{prevDP2}, but our bounds are tighter and more useful in a practical way, as we give straightforward, non-asymptotic, formulas for $\epsilon$ and $\delta$. On the other hand, authors of~\cite{prevDP2} proved only that due to Chernoff bounds, for a fixed parameter $\epsilon$ the parameter $\delta$ is asymptotically negligible. However, we completed their proof and actually plugged the Chernoff bounds. In Figures~\ref{fig:bin1} and~\ref{fig:bin2} one can see the comparison of our guarantee for parameters, and the guarantee which are given by the (completed) proof in paper~\cite{prevDP2}.

As one can see in Figures~\ref{fig:bin1} and~\ref{fig:bin2}, our Theorem does not only give non-asymptotical, explicit parameters (both for the case where $\epsilon$ is fixed and the case where $\delta$ is fixed), but also, due to slightly more careful reasoning, our bound is tighter than the bound which authors of~\cite{prevDP2} have implicitly shown in their proof.

\begin{figure}[!ht]
    \centering
    \includegraphics[width=0.45\textwidth]{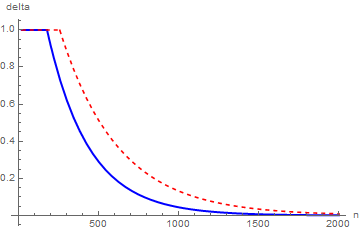}
    \caption{$\epsilon = 0.5$, $p=0.95$, red dashed line is a guarantee for parameter $\delta$ in paper~\cite{prevDP2}, blue thick line is guarantee from our Theorem~\ref{THMBin}.}
    {\label{fig:bin1}}
\end{figure}

\begin{figure}[!ht]
    \centering
    \includegraphics[width=0.45\textwidth]{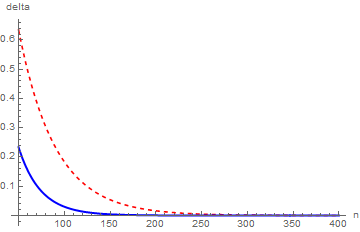}
    \caption{$\epsilon = 1$, $p=0.2$, red dashed line is a guarantee for parameter $\delta$ in paper~\cite{prevDP2}, blue thick line is guarantee from our Theorem~\ref{THMBin}.}
    {\label{fig:bin2}}
\end{figure}

That was just a  warm-up scenario to show how does noiseless privacy work with simple data distribution. Let us move to a more interesting model where users data has different, but still independent distributions. Note that from now on we do not assume any specific distribution of the data.  
Let us recall  two facts. First one is a known result in differential privacy literature.

\begin{fact}[From \cite{DworkAlgo}]\label{factDwork}
Fix $\epsilon > 0$ and $\delta > 0$. Let $c$ such that $c^2 > 2\ln(\frac{1.25}{\delta})$. For random variable $Z \sim \mathcal{N}(0,\sigma^2)$, where $\sigma \geqslant \frac{c\Delta}{\epsilon}$ we have
$$
P[u + Z \in S] \leqslant e^{\epsilon}P[v + Z \in S]+\delta,
$$
where $u$ and $v$ are any real numbers such that $|u - v| \leqslant \Delta$.
\end{fact}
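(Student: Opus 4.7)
The plan is to reduce the statement to a one-dimensional density-ratio comparison and then control the "bad" set on which the ratio is too large by a tail bound for the Gaussian. First I would use translation invariance to put $u=0$, $v=\mu$ with $|\mu|\le\Delta$, and by the symmetry $z\mapsto -z$ of $\mathcal{N}(0,\sigma^2)$ reduce to the worst case $\mu=\Delta>0$. Writing $\phi(z)=\frac{1}{\sqrt{2\pi}\,\sigma}\exp(-z^2/(2\sigma^2))$ for the density of $Z$, a direct computation gives
\[
\frac{\phi(z)}{\phi(z-\Delta)}=\exp\!\left(\frac{\Delta^{2}-2z\Delta}{2\sigma^{2}}\right),
\]
so the set $A=\{z:\phi(z)>e^{\epsilon}\phi(z-\Delta)\}$ is simply the half-line $\{z<\Delta/2-\sigma^{2}\epsilon/\Delta\}$.

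Next I would split $P[Z\in S]=\int_{S\cap A}\phi+\int_{S\setminus A}\phi$. On $S\setminus A$ the definition of $A$ gives $\phi(z)\le e^{\epsilon}\phi(z-\Delta)$, hence
\[
P[Z\in S]\;\le\;P[Z\in A]+e^{\epsilon}\!\int_{S\setminus A}\!\phi(z-\Delta)\,dz\;\le\;P[Z\in A]+e^{\epsilon}P[Z+\Delta\in S].
\]
Since $P[u+Z\in S]=P[Z\in S-u]$ and $P[v+Z\in S]=P[Z+\Delta\in S-u]$ after the reduction, it remains to show $P[Z\in A]\le\delta$.

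To handle the tail, I would standardise with $Z/\sigma\sim\mathcal{N}(0,1)$ so that $P[Z\in A]=\Phi\!\left(\frac{\Delta}{2\sigma}-\frac{\sigma\epsilon}{\Delta}\right)$. The hypothesis $\sigma\ge c\Delta/\epsilon$ yields $\frac{\sigma\epsilon}{\Delta}\ge c$ and $\frac{\Delta}{2\sigma}\le\frac{\epsilon}{2c}$, so the argument is at most $-t$ with $t=c-\epsilon/(2c)$. Applying the Mills-ratio bound $\Phi(-t)\le\frac{1}{t\sqrt{2\pi}}\exp(-t^{2}/2)$ and expanding $t^{2}/2=c^{2}/2-\epsilon/2+\epsilon^{2}/(8c^{2})$ reduces the claim to a numerical inequality which follows from $c^{2}>2\ln(1.25/\delta)$.

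The main obstacle is this last calibration step: the $1.25$ rather than $1$ inside the logarithm exists precisely to absorb both the $\frac{1}{t\sqrt{2\pi}}$ prefactor from Mills' ratio and the $-\epsilon/2$ cross term from expanding $t^{2}$, so the delicate part is keeping track of these constants to show the resulting bound is genuinely $\le\delta$ rather than merely $\le C\delta$ for some $C>1$. Everything else is routine Gaussian algebra.
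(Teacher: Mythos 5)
The paper does not actually prove this statement; it imports it verbatim from \cite{DworkAlgo} (the Gaussian-mechanism theorem, Theorem A.1 there), so the only thing to compare you against is the standard argument in that reference --- and your outline is exactly that argument: compute the privacy-loss ratio, identify the ``bad'' set $A=\{z<\Delta/2-\sigma^{2}\epsilon/\Delta\}$ where the ratio exceeds $e^{\epsilon}$, split $S$ accordingly, and reduce everything to $P[Z\in A]\le\delta$ via a Gaussian tail bound. The decomposition and the reduction are correct. (The reduction to the worst case $\mu=\Delta$ deserves one extra line: for $0<\mu<\Delta$ the bad half-line $\{z<\mu/2-\sigma^{2}\epsilon/\mu\}$ is contained in the one for $\mu=\Delta$, so its probability is only smaller; the rest of the argument is run separately for each $\mu$.)

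The genuine gap is in the final calibration, and it is not merely a matter of constants. After Mills' ratio you need $\frac{1}{t\sqrt{2\pi}}\,e^{\epsilon/2}\cdot\frac{\delta}{1.25}\le\delta$ with $t=c-\epsilon/(2c)$, i.e.\ $t\ge e^{\epsilon/2}/(1.25\sqrt{2\pi})$, and before that you need $t>0$, i.e.\ $c^{2}>\epsilon/2$. Neither follows from $c^{2}>2\ln(1.25/\delta)$ alone, because that hypothesis involves only $\delta$: for large $\epsilon$ the threshold $\Delta/2-\sigma^{2}\epsilon/\Delta=\Delta(1/2-c^{2}/\epsilon)$ becomes positive and $A$ captures more than half the mass. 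Indeed the statement as printed (arbitrary $\epsilon>0$) is false: take $\epsilon=100$, $\delta=0.05$, $\Delta=1$, $\sigma=c\Delta/\epsilon\approx 0.025$, $u=0$, $v=1$, $S=[-0.1,0.1]$; then $P[u+Z\in S]\approx 1$ while $e^{\epsilon}P[v+Z\in S]+\delta\approx\delta$. The source carries the hypothesis $\epsilon\in(0,1)$, which the transcription here drops; under $\epsilon\le 1$ (and $\delta$ bounded away from $1$, so that $c-1/(2c)\ge e^{1/2}/(1.25\sqrt{2\pi})\approx 0.53$) your calibration does close exactly as you describe. So: right proof strategy, but you must restore the missing hypothesis on $\epsilon$ --- without it no proof can exist.
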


Second fact is a well known theorem in probability theory, one can find it for example in \cite{feller} . 

\begin{fact}[Berry-Esseen Theorem]\label{factBE}
Let $X_1,\ldots,X_n$ be a sequence of independent random variables. Let $EX_i = 0$, $EX_i^2 = \sigma_i^2 > 0$ and $E|X_i|^3 = \rho_i < \infty$. Let $F_n$ denote the cumulative distribution function of their normalized partial sum and $\Phi$ denotes the cumulative distribution function of standard normal distribution. Then
$$
\sup_{x\in\mathbb{R}}|F_n(x) - \Phi(x)| \leqslant \frac{C\cdot \sum_{i=1}^n{\rho_i}}{\left(\sum_{i=1}^n{\sigma_i^2}\right)^{\frac{3}{2}}}
$$
where $C \leqslant 0.5591$.
\end{fact}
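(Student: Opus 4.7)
The plan is to prove Fact 2 by the classical characteristic-function route via Esseen's smoothing inequality, which converts the sup-norm distance between the two CDFs into an integral of the difference of their Fourier transforms. After a trivial rescaling, I may assume $B_n^2 := \sum_{i=1}^n \sigma_i^2 = 1$, so that $F_n$ is the CDF of $S_n = \sum_i X_i/B_n$ (with the obvious rescaling of the $X_i$) and the target bound becomes $\sup_x |F_n(x) - \Phi(x)| \leqslant C \sum_i \rho_i$ where now $\rho_i$ stands for the rescaled third absolute moments summing to $\beta := \sum_i \rho_i / B_n^3$.

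The key tool is Esseen's smoothing lemma: for any distribution function $F$ and any $G$ with bounded derivative,
\[
\sup_{x\in\mathbb{R}}|F(x)-G(x)| \;\leqslant\; \frac{1}{\pi}\int_{-T}^{T}\left|\frac{\hat{F}(t)-\hat{G}(t)}{t}\right|dt \;+\; \frac{24}{\pi T}\sup_x G'(x),
\]
for every $T>0$. Taking $G=\Phi$ gives $\sup \Phi' = 1/\sqrt{2\pi}$. I would apply this with $F=F_n$, so $\hat{F}_n(t)=\prod_{i=1}^n \phi_i(t)$ where $\phi_i(t)=\mathbb{E}[e^{itX_i/B_n}]$, and $\hat{\Phi}(t)=e^{-t^2/2}$. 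The integral would then be estimated on a window $|t|\leqslant T$ with $T$ proportional to $1/\beta$, and the smoothing error term would contribute $O(\beta)$.

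The analytic core is to show, for $|t|\leqslant T$, that $|\hat{F}_n(t)-e^{-t^2/2}|$ is small. First I would Taylor-expand each factor to third order to obtain $\phi_i(t)=1-\tfrac{\sigma_i^2 t^2}{2 B_n^2}+r_i(t)$ with $|r_i(t)|\leqslant \rho_i |t|^3/(6 B_n^3)$. Then, using $|\log(1+z)-z|\leqslant |z|^2$ for $|z|\leqslant 1/2$, I would pass to the exponent and write
\[
\prod_i \phi_i(t) \;=\; \exp\!\left(-\tfrac{t^2}{2}+\textstyle\sum_i \eta_i(t)\right),
\]
where $\sum_i|\eta_i(t)|\leqslant c\,\beta\,|t|^3$ on the chosen window. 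The inequality $|e^{-t^2/2}(e^w-1)|\leqslant |w|\,e^{-t^2/2+|w|}$ then yields $|\hat{F}_n(t)-e^{-t^2/2}|\leqslant c\,\beta\,|t|^3 e^{-t^2/4}$ for $|t|\leqslant T$, which is integrable against $1/|t|$ and gives a contribution of the form $c'\beta$. Balancing the two contributions by optimizing $T$ produces the bound $\sup_x |F_n - \Phi| \leqslant C\beta$ which, after undoing the rescaling, is exactly the claim.

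The main obstacle is not the structure of the argument but the numerical constant: a naive execution of the steps above yields a constant in the single digits rather than the tight value $C \leqslant 0.5591$ cited here. Achieving the sharp constant requires (i)~a careful version of Esseen's smoothing lemma in which the factor in front of $1/T$ is optimized, (ii)~splitting the integration range into a small-$t$ region (where one needs the refined two-term Taylor remainder with the fourth-moment correction, or a careful use of $|\phi(t)-1+\tfrac{\sigma^2 t^2}{2}|$ versus $|\phi(t)e^{\sigma^2 t^2/2}-1|$) and a large-$t$ region (handled by a truncation/symmetrization of the $X_i$), and (iii)~numerical optimization over free parameters. Since Fact 2 is only invoked in the paper as a black box from the literature, I would not reproduce this delicate constant-chasing and would simply cite Shevtsova's refinement after presenting the sketch above.
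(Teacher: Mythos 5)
This statement is Fact~\ref{factBE}, which the paper does not prove at all: it is imported as a black box, with \cite{feller} cited for the theorem and \cite{tyurin} for the numerical constant $C\leqslant 0.5591$. So there is no in-paper argument to compare against; the only question is whether your sketch is a faithful account of how the cited result is actually established. It is: the Esseen smoothing inequality, the third-order Taylor expansion of each characteristic function, the window $|t|\leqslant T$ with $T$ proportional to the reciprocal Lyapunov fraction $\beta=\sum_i\rho_i/B_n^3$, and the final balancing of the two error terms is exactly the classical route, and your decision to cite the refined literature for the sharp constant rather than chase it is the same decision the authors made.

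One step in your outline is stated more strongly than it can be proved as written. In the non-identically-distributed case, the passage to the exponent via $|\log(1+z)-z|\leqslant|z|^2$ requires $|\phi_i(t)-1|\leqslant 1/2$ for every $i$, and since one only has $\sigma_i/B_n\leqslant\beta^{1/3}$ (by Lyapunov's inequality $\sigma_i^3\leqslant\rho_i$), the bound $|\phi_i(t)-1|\leqslant\sigma_i^2t^2/(2B_n^2)\leqslant\tfrac12\beta^{2/3}t^2$ only guarantees this for $|t|\lesssim\beta^{-1/3}$, not on the whole window $|t|\leqslant T\asymp\beta^{-1}$. The standard fix is a case split: for $|t|\leqslant\beta^{-1/3}$ the termwise logarithm works as you describe, while for $\beta^{-1/3}<|t|\leqslant T$ one instead bounds $|\hat F_n(t)|$ and $e^{-t^2/2}$ separately by $e^{-t^2/3}$ (the former via the symmetrization bound $|\phi_i(t)|^2\leqslant\exp\bigl(-\sigma_i^2t^2/B_n^2+\tfrac43\rho_i|t|^3/B_n^3\bigr)$) and observes that on this range $\beta|t|^3\geqslant1$, so the target bound $c\,\beta|t|^3e^{-t^2/3}$ already dominates the crude difference. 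You gesture at exactly this in item (ii) of your closing remarks, so this is a presentational gap in the sketch rather than a missing idea; but if the sketch were to be written out, that split (and the symmetrization estimate on $|\hat F_n|$ itself, not just on the remainders $\eta_i$) is not optional even for a non-optimized constant.
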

The upper bound for constant $C$ comes from \cite{tyurin}.

After stating all necessary facts and definitions, we are ready to present the general theorem for independent data.

\begin{theorem}\label{ThmInd}
Let $X = (X_1,\ldots,X_n)$ be a data vector, where $X_i$ are independent random variables. Let $\mu_i = EX_i$ and $\sigma^2 = \frac{\sum_{i=1}^n{Var(X_i)}}{n}$ and $E|X_i|^3 < \infty$ for every $i \in \{1,\ldots,n\}$. Consider mechanism $M(X) = \sum_{i=1}^{n}(X_i)$. We denote data sensitivity of vector $X$ and mechanism $M$ as $\Delta$. $M(X)$ is $(\epsilon,\delta)$-NP with following parameters
$$
\epsilon = \sqrt{\frac{\Delta^2\ln (n)}{n\sigma^2}},
$$
and
$$
\delta = \frac{1.12 \sum_{i=1}^n{E|X_i-\mu_i|^3}}{\left(n\sigma^2 \right)^{\frac{3}{2}}}(1+e^{\epsilon})+\frac{4}{5\sqrt{n}}.
$$
\end{theorem}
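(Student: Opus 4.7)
The plan is to approximate $M(X')$ by a Gaussian $N$ of matching mean and variance using Berry--Esseen (Fact~\ref{factBE}), apply the Gaussian-mechanism calibration of Fact~\ref{factDwork} to $N$, and translate the conclusion back to $M(X')$, which pays one Berry--Esseen error on each side and explains the factor $1+e^{\epsilon}$ in $\delta$.

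By Definition~\ref{adjVec} it suffices, by symmetry, to treat the ``removal'' case $X'=(X_1,\ldots,X_{i-1},X_{i+1},\ldots,X_n)$. Then $M(X)=T+X_i$ with $T:=M(X')=\sum_{j\ne i}X_j$ independent of $X_i$, and the sensitivity hypothesis forces $|X_i|\le\Delta$ almost surely. Let $N\sim\mathcal{N}(E[T],\mathrm{Var}(T))$; note $\mathrm{Var}(T)=n\sigma^2-\sigma_i^2$, agreeing with $n\sigma^2$ up to a relative $O(1/n)$ correction that is absorbed into the stated constants. Applying Fact~\ref{factBE} to $T$ and, for each fixed $x$, to the shifted half-line $\{T+x\le t\}$, gives
$$\sup_{t,x}\bigl|P(T+x\le t)-P(N+x\le t)\bigr|\;\le\;\frac{0.5591\sum_{j=1}^n E|X_j-\mu_j|^3}{(n\sigma^2)^{3/2}}=:\delta_{BE};$$
reducing an arbitrary interval $B$ to a difference of two half-lines costs an additional factor of two, producing the constant $2\cdot 0.5591\approx 1.12$ in the theorem.

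The second ingredient is Fact~\ref{factDwork} applied to $N$ with $\sigma=\sqrt{\mathrm{Var}(T)}$, sensitivity $\Delta$, and the prescribed $\epsilon=\Delta\sqrt{\ln n/(n\sigma^2)}$. One computes $c:=\sigma\epsilon/\Delta\approx\sqrt{\ln n}$, and the hypothesis $c^2>2\ln(1.25/\delta_F)$ is then satisfied by the choice $\delta_F=\tfrac{4}{5\sqrt n}$, so that $P(N+x\in B)\le e^{\epsilon}P(N\in B)+\delta_F$ for every $|x|\le\Delta$. Integrating this over the law of $X_i$ (independent of $T$ and $N$) and chaining the shifted Berry--Esseen estimate on both ends gives
$$P(M(X)\in B)\le P(N+X_i\in B)+\delta_{BE}\le e^{\epsilon}P(N\in B)+\delta_F+\delta_{BE}\le e^{\epsilon}P(T\in B)+(1+e^{\epsilon})\delta_{BE}+\delta_F,$$
which is precisely the claimed $(\epsilon,\delta)$ after substitution.

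The principal obstacle is that Fact~\ref{factBE} controls only the Kolmogorov (half-line) distance, whereas Definition~\ref{SDP-DEF} quantifies over \emph{all} Borel $B\subset\mathbb R$. Since $M(X)$ is one-dimensional, the standard remedy is to establish the $(\epsilon,\delta)$-inequality first on half-lines, extend it to intervals at the cost of the factor $2$ already absorbed into $1.12$, and then pass to arbitrary Borel sets using outer regularity of the Gaussian law together with a privacy-loss / density-ratio decomposition for $N$. The remaining work is routine bookkeeping: handling the symmetric ``addition'' adjacency with the roles of $X$ and $X'$ swapped, verifying that $\delta_F=4/(5\sqrt n)$ satisfies the hypothesis of Fact~\ref{factDwork} uniformly in $n$ (the very small $n$ regime being trivial since the bound exceeds $1$), and absorbing the $O(1/n)$ gap between $\mathrm{Var}(T)$ and $n\sigma^2$ into the stated constants.
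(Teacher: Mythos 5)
Your proof follows essentially the same route as the paper's: approximate the sum by a Gaussian via Berry--Esseen (Fact~\ref{factBE}), apply Fact~\ref{factDwork} to that Gaussian with $\delta_2=\tfrac{4}{5\sqrt{n}}$ and $\epsilon=\sqrt{\Delta^2\ln(n)/(n\sigma^2)}$, then convert back, which yields exactly $\delta=2\delta_{BE}(1+e^{\epsilon})+\tfrac{4}{5\sqrt{n}}$ with the constant $1.12=2\cdot 0.56$. If anything you are more explicit than the paper, which compares shifted copies $B_u$, $B_v$ of the full $n$-term sum rather than isolating $T=M(X')$ and integrating over $X_i$, and which silently applies the Kolmogorov-distance bound to arbitrary Borel sets --- the ``principal obstacle'' you flag is simply not addressed in the paper's own argument.
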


The main idea for proving this theorem is to use Berry-Esseen theorem to deal with random variables of normal distribution instead of the actual distribution of the data. Then we use normal distribution properties to obtain appropriate $\epsilon$ and $\delta$. The proof of this theorem is moved to the Appendix.

See that Theorem~\ref{ThmInd} is essentially a generalization of Theorem 7 in~\cite{prevDP2}, which is a simple consequence of Theorem~\ref{ThmInd}. In our case we give \textbf{explicit formula with all constants}, which asymptotically, after using big oh notation simplifies to the same as in~\cite{prevDP2}. As we emphasized before, explicit formulas for privacy parameters is much more useful for a practitioner than the order of magnitude. Moreover, we do not suffer from limitations of Theorem 7 in \cite{prevDP2}, where the authors assumed that the result of the query has to be $O(\log(n))$. In Section~\ref{ssect:DEP} we also give a generalization for locally dependent data.


Theorem~\ref{ThmInd} gives us very general notion of privacy parameters for summing independent data. Note that in Theorem~\ref{ThmInd} we assumed nothing about the distribution of the data, apart from being independent. The only values we need to know is the variance and sum of appropriate central moments (or upper bounds for these values). We also present an example.

\begin{example}\label{ex2}
\textup{We consider a data vector $X=(X_1,\ldots,X_n)$, where $X_i$ are independent random variables. Let $\Delta = 30$. Let $\sigma^2 = \frac{\sum_{i=1}^n{\sigma_i^2}}{n} = 4$. Let also $\sum_{i=1}^n{E|X_i-\mu_i|^3} = 3\cdot n$. We use mechanism $M(X) =  \sum_{i=1}^{n}(X_i)$. Using Theorem~\ref{ThmInd} we obtain that it is $(\epsilon,\delta)$-NP. Figure~\ref{fig:ind1} shows how the $\epsilon$ decreases with $n$, while Figure~\ref{fig:ind2} shows how $\delta$ decreases with $n$.}
\begin{figure}[h!]
    \centering
    \includegraphics[width=0.45\textwidth]{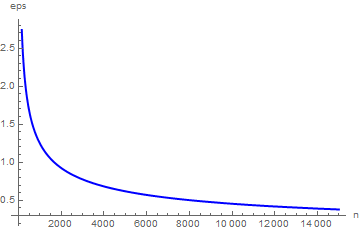}
    \caption{Parameter $\epsilon$ in Example~\ref{ex2}.}
    {\label{fig:ind1}}
\end{figure}

\begin{figure}[h!]
    \centering
    \includegraphics[width=0.45\textwidth]{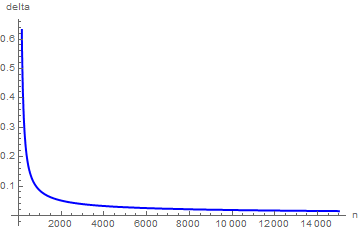}
    \caption{Parameter $\delta$ in Example~\ref{ex2}.}
    {\label{fig:ind2}}
\end{figure}
\textup{We can see that for $n$ around $10000$ parameter $\delta$ is smaller than $0.05$, which is a constant widely used in differential privacy literature, and decreases further. Also, note that for $n \geqslant 10000$  the parameter $\epsilon$ is below $0.5$ which also is a widely used constant in differential privacy papers (see for example~\cite{Hubercik}). Clearly, the parameters keep improving with more users.}
\end{example}

\section{Explicit Bounds for Locally \\ Dependent Data}\label{ssect:DEP}
In the previous section we gave a general treatment for privacy parameters of independent variables. However, in many cases the data has some local dependencies involved. Imagine a situation where we want to collect the data of yearly salary from former students of a specific university. Say, those that finished their education at most 5 years ago. Our goal is to obtain the average yearly salary of all students that finished their education during last five years. Now one can easily see that there will be some local dependencies between the participants as some of the students might work in the same company, launch a startup together or just work in the same field. This will affect their salary and therefore make it locally dependent. Such dependencies are modeled using \textit{dependency neighborhoods} notion, which we defined in Subection~\ref{ssect:AdvModel}.


As previously, we want to take the sum of all our data and show privacy parameters for this mechanism. We are going to take a similar approach as in Theorem~\ref{ThmInd}. That is, we want to bound the distance between the sum of our data and normal distribution. Then, using standard differential privacy properties of normal distribution (described in Fact~\ref{factDwork}) we derive privacy parameters. However, this time we cannot use Berry-Esseen theorem to bound the mentioned distance, as the data is not independent. Instead, we use Stein's method (see for example~\cite{steinIntro,fundStein}), which allows to bound the Kolmogorov distance between two random variables. Apart from that, the presented  reasoning is very similar to Theorem~\ref{ThmInd}. Firstly, we introduce some notation and facts.

\begin{definition}\label{dK}
Let $X$ and $Y$ be a random variables. Let $\mu$ and $\nu$ be their corresponding probability measures. We denote their Kolmogorov distance as $d_K(X,Y)$ which is defined as
$$
d_K(X,Y) = \sup_{t \in \mathbb{R}} \left|F_X(t) - F_Y(t)\right|,
$$
where $F_X(\cdot)$ denotes the cumulative distribution function of $X$. Furthermore, we denote Wasserstein distance as $d_W(X,Y)$ which is defined as
$$
d_W(X,Y) = \sup_{h \in \mathcal{H}} \left|\int{h(x)d\mu(x)} - \int{h(x)d\nu(x)}\right|,
$$
where $\mathcal{H} = \{h: \mathbb{R} \rightarrow \mathbb{R}: |h(x) - h(y)| \leqslant |x-y|\}$.
\end{definition}
These are standard probability metrics, their definition is also given in, for example,~\cite{fundStein}. We also recall a useful relation between Kolmogorov and Wasserstein distance.

\begin{fact}[From~\cite{fundStein}]\label{factStein}
Suppose that a random variable $Y$ has its density bound by some constant $C$. Then for any random variable $X$ we have
$$
d_K(X,Y) \leqslant \sqrt{2C d_W(X,Y)}.
$$
Moreover, if $Y \sim \mathcal{N}(0,1)$, then for any random variable $X$ we have
$$
d_K(X,Y) \leqslant \left(\frac{2}{\pi}\right)^{\frac{1}{4}}\sqrt{d_W(X,Y)}.
$$ 
\end{fact}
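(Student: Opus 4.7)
The plan is to bound the Kolmogorov distance by approximating the non-Lipschitz indicator $\mathbf{1}_{(-\infty,t]}$ by a Lipschitz function and then invoking the dual characterisation of $d_W$ from Definition~\ref{dK}. For a fixed $t \in \mathbb{R}$ and a tunable parameter $\alpha > 0$, I would first define the piecewise linear function $h_\alpha$ that equals $1$ on $(-\infty,t]$, equals $0$ on $[t+\alpha,\infty)$ and interpolates linearly in between. This $h_\alpha$ is $(1/\alpha)$-Lipschitz, so $\alpha h_\alpha$ lies in the test class $\mathcal{H}$ and consequently $|E h_\alpha(X) - E h_\alpha(Y)| \leqslant \tfrac{1}{\alpha} d_W(X,Y)$.

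Next I would translate back to cumulative distribution functions. Trivially $F_X(t) \leqslant E h_\alpha(X)$, while the overshoot on the $Y$ side is computed explicitly as $E h_\alpha(Y) - F_Y(t) = \int_t^{t+\alpha}\bigl(1 - (y-t)/\alpha\bigr) f_Y(y)\,dy$, and using the density bound $f_Y \leqslant C$ together with $\int_t^{t+\alpha}\bigl(1-(y-t)/\alpha\bigr)\,dy = \alpha/2$ this is at most $C\alpha/2$. Combining these two inequalities, and running the symmetric construction with a bump $g_\alpha$ shifted to the left of $t$ for the opposite direction, gives the uniform estimate
$$
|F_X(t) - F_Y(t)| \leqslant \frac{1}{\alpha} d_W(X,Y) + \frac{C\alpha}{2}.
$$
Taking the supremum in $t$ and then optimising over $\alpha$ (the derivative of the right hand side vanishes at $\alpha = \sqrt{2 d_W(X,Y)/C}$) yields exactly $d_K(X,Y) \leqslant \sqrt{2 C d_W(X,Y)}$. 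The second claim then follows immediately by specialising to $Y \sim \mathcal{N}(0,1)$, whose density is everywhere bounded by $1/\sqrt{2\pi}$, and simplifying $\sqrt{2/\sqrt{2\pi}}$ to $(2/\pi)^{1/4}$.

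The main technical subtlety is getting the correct numerical constant. A naive bound $E h_\alpha(Y) - F_Y(t) \leqslant C\alpha$, obtained by replacing the ramp with the indicator of $[t,t+\alpha]$, would produce only the weaker inequality $d_K \leqslant 2\sqrt{C d_W}$; it is precisely by exploiting the linear shape of $h_\alpha$, whose average over $[t,t+\alpha]$ is $1/2$ rather than $1$, that one recovers the sharper factor of $\sqrt{2}$ advertised in the statement. Everything else is elementary calculus and an application of Definition~\ref{dK}.
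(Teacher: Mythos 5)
Your argument is correct and is essentially the standard smoothing proof of this fact as given in the cited source~\cite{fundStein}; the paper itself imports the statement without proof, so there is nothing to compare beyond noting that your route matches the reference. In particular, you correctly identify the one delicate point: bounding $E h_\alpha(Y)-F_Y(t)$ by $C\alpha/2$ (the ramp's average height is $1/2$) rather than $C\alpha$ is what produces the constant $\sqrt{2}$ after optimising at $\alpha=\sqrt{2d_W/C}$, and specialising to $C=1/\sqrt{2\pi}$ for the standard normal indeed simplifies $\sqrt{2/\sqrt{2\pi}}$ to $\left(2/\pi\right)^{1/4}$.
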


Lastly, we recall a theorem from~\cite{fundStein}. 
\begin{fact}[Theorem 3.6 in~\cite{fundStein}]\label{thmStein}
Suppose $X_1,\ldots,X_n$ are \\ random variables such that for every $i$ we have $EX_i^4 < \infty$, $EX_i = 0$, $\sigma^2 = Var[\sum_{i = 1}^n{X_i}]$ and define $W = \frac{\sum_{i=1}^n{X_i}}{\sigma}$. Let the collection $(X_1,\ldots,X_n)$ have dependency neighborhoods $N_i$, $i \in \{1,\ldots,n\}$ and also define $D = \max_{1 \leqslant i \leqslant n}{|N_i|}$. Then, for random variable $Z$ with standard normal distribution we have
$$
d_W(W,Z) \leqslant \frac{D^2}{\sigma^3}\sum_{i=1}^n{E|X_i|^3} + \frac{D^\frac{3}{2} \sqrt{28}}{\sigma^2\sqrt{\pi}} \sqrt{\sum_{i=1}^n{EX_i^4}}.
$$
\end{fact}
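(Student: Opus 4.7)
The plan is to apply Stein's method in its dependency-neighborhood form. For any $1$-Lipschitz test function $h$, I would first invoke the Stein equation $f'(w) - w f(w) = h(w) - E h(Z)$, whose solution $f = f_h$ satisfies the classical bounds $\|f'\|_\infty \leqslant \sqrt{2/\pi}$ and $\|f''\|_\infty \leqslant 2$. Since $d_W(W, Z) = \sup_{h} |E[f'(W) - W f(W)]|$ taken over $1$-Lipschitz $h$, the task reduces to bounding this Stein expectation uniformly in $h$.

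To exploit the local dependency structure, I would set $T_i := \sigma^{-1} \sum_{j \in N_i} X_j$ and $W^{(i)} := W - T_i$. Since $X_i$ is independent of $\{X_j\}_{j \notin N_i}$, it is independent of $W^{(i)}$, and together with $E X_i = 0$ this gives $E[X_i f(W^{(i)})] = 0$. Summing over $i$ and noting that $W f(W) = \sigma^{-1} \sum_i X_i f(W)$ yields $E[W f(W)] = \sigma^{-1} \sum_i E[X_i(f(W) - f(W^{(i)}))]$. For the other piece, the identity $\sum_i E[X_i T_i] = \sigma$, which comes directly from $\sigma^2 = \sum_i \sum_{j \in N_i} E[X_i X_j]$, allows me to write $E[f'(W)] = \sigma^{-1} \sum_i E[X_i T_i] \cdot E[f'(W)]$ and match it term-by-term with the previous sum.

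A first-order Taylor expansion $f(W) - f(W^{(i)}) = T_i f'(W^{(i)}) + \tfrac{1}{2} T_i^2 f''(\xi_i)$ then reduces the difference $E[W f(W) - f'(W)]$ to two error sources: a \emph{second-order remainder} of the form $\sigma^{-1} \|f''\|_\infty \sum_i E|X_i T_i^2|$, and a \emph{covariance-type term} of the form $\sigma^{-1} \sum_i E[X_i T_i(f'(W^{(i)}) - f'(W))]$. The first one, after expanding $T_i^2$ into pairs, using $|N_i| \leqslant D$ to bound the number of non-vanishing terms, and applying H\"older's inequality $E|X_i X_j X_k| \leqslant \tfrac{1}{3}(E|X_i|^3 + E|X_j|^3 + E|X_k|^3)$, produces the advertised $D^2 \sigma^{-3} \sum_i E|X_i|^3$ contribution.

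The main obstacle is the covariance-type term: a naive bound via $\|f''\|_\infty \cdot E|X_i T_i| \cdot \|T_i\|$ would give a factor of $D^2$ rather than the sharper $D^{3/2}$. To recover the stated exponent, I would apply Cauchy--Schwarz across the sum so that one factor concentrates the $L^2$-mass of $T_i$, which contributes $\sqrt{D}$ via the neighborhood size, while the other factor collects the variance of a bilinear form in the $X_j$'s whose overlap count scales as $D$ rather than $D^2$, because only indices \emph{within the same neighborhood} produce non-vanishing covariances. Combining these two $\sqrt{D}$-factors, the $\sqrt{28}$ that emerges from carefully tracking $\|f''\|_\infty \leqslant 2$ against the combinatorics, the $\sqrt{\pi}$ from the Stein derivative estimate, and a final Cauchy--Schwarz across $i$ that pulls out $\sqrt{\sum_i E X_i^4}$ (which is exactly where the hypothesis $E X_i^4 < \infty$ is used), yields precisely the second summand in the stated bound.
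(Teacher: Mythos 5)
The paper does not prove this statement at all: it is imported verbatim as Theorem~3.6 of the cited Stein's-method survey, so there is no internal proof to compare against. Your sketch reconstructs, essentially correctly, the standard argument from that source: the Stein equation with $\|f'\|_\infty\leqslant\sqrt{2/\pi}$ and $\|f''\|_\infty\leqslant 2$, the decomposition $W^{(i)}=W-T_i$ with $E[X_if(W^{(i)})]=0$, the normalization $\sum_iE[X_iT_i]=\sigma$, a Taylor step producing the $D^2\sigma^{-3}\sum_iE|X_i|^3$ term via the three-way AM--GM on $E|X_iX_jX_k|$, and a variance bound on a bilinear form for the $D^{3/2}$ term. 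That is the right proof and the right source of every ingredient.

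One structural point is slightly off and worth fixing if you write this out in full. You expand around $W^{(i)}$, so your main term is $\sigma^{-1}\sum_iE[X_iT_if'(W^{(i)})]$, and you attribute the $D^{3/2}\sqrt{\sum_iEX_i^4}$ contribution to the difference $\sigma^{-1}\sum_iE\bigl[X_iT_i\bigl(f'(W^{(i)})-f'(W)\bigr)\bigr]$. That difference is controlled by $\|f''\|_\infty$ and yields another \emph{third}-moment term of order $D^2\sigma^{-3}\sum_iE|X_i|^3$ (to be absorbed into the first summand's constant), not the fourth-moment term. The fourth-moment term comes from the remaining piece, $E\bigl[f'(W)\bigl(1-\sigma^{-1}\sum_iX_iT_i\bigr)\bigr]\leqslant\|f'\|_\infty\sqrt{\mathrm{Var}\bigl(\sigma^{-2}\sum_i\sum_{j\in N_i}X_iX_j\bigr)}$, where the covariance of $X_iX_j$ and $X_kX_l$ vanishes unless the index sets interact, giving at most order $D^3$ surviving terms per $i$ and hence $\sqrt{D^3\sum_iEX_i^4}$; the constant $\sqrt{28}=\sqrt{2/\pi}\cdot\sqrt{14\pi/\ldots}$ bookkeeping lives entirely in that variance estimate. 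The cleanest route is to Taylor-expand around $W$ rather than $W^{(i)}$, which makes the main term $E[X_iT_if'(W)]$ directly and separates the two error sources exactly as in the cited proof. (Incidentally, the paper itself is inconsistent here: it states the fact with $\sqrt{28}$ but uses $\sqrt{26}$ in Theorem~\ref{THMDep} and its proof.)
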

This fact is obtained by using Stein's method. We will use these facts to prove a following

\begin{theorem}\label{THMDep}
Let $X=(X_1,\ldots,X_n)$ be a data vector. We consider mechanism $M(X) = \sum_{i=1}^{n}(X_i)$. Let $EX_i = \mu_i$ and $EX_i^4 < \infty$. Suppose there are dependency neighborhoods $N_i$, $i \in \{1,\ldots,n\}$, where $D = \max_{1 \leqslant i \leqslant n}{|N_i|}$. Let $\sigma^2 = Var(M(X))$. If the data sensitivity is $\Delta$ then $M(X)$ is $(\epsilon,\delta)$-NP with following parameters
$$
\epsilon = \sqrt{\frac{\Delta^2 \ln(n)}{\sigma^2}},
$$
and
$$
\delta = c(\epsilon) \sqrt{\frac{D^2}{\sigma^3}\sum_{i=1}^n{E|X^*_i|^3} + \frac{D^\frac{3}{2} \sqrt{26}}{\sigma^2\sqrt{\pi}} \sqrt{\sum_{i=1}^n{E\left(X^*_i\right)^4}}} + \frac{4}{5\sqrt{n}},
$$ 
where $X_i^* = (X_i - \mu_i)$ and
$$
c(\epsilon) = 2(1+e^{\epsilon}) \left(\frac{2}{\pi}\right)^{\frac{1}{4}}.
$$
\end{theorem}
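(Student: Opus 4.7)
The plan is to mirror the proof of Theorem~\ref{ThmInd}, replacing the Berry--Esseen normal approximation with the Stein-method bound of Fact~\ref{thmStein}, which is valid under local dependencies.

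First I would center the variables: set $X_i^* = X_i - \mu_i$ and $W = \sigma^{-1}\sum_{i=1}^n X_i^*$. The collection $(X_i^*)$ has mean zero, finite fourth moments, inherits the same dependency neighborhoods $N_i$ of size at most $D$, and satisfies $\mathrm{Var}(\sigma W) = \sigma^2$. Fact~\ref{thmStein} then yields a bound on $d_W(W, Z)$ for $Z \sim \mathcal{N}(0,1)$, and Fact~\ref{factStein} converts it to
$$d_K(M(X), Y) = d_K(W, Z) \leqslant \left(\tfrac{2}{\pi}\right)^{1/4}\sqrt{\tfrac{D^2}{\sigma^3}\sum_{i=1}^n E|X_i^*|^3 + \tfrac{D^{3/2}\sqrt{28}}{\sigma^2\sqrt{\pi}}\sqrt{\sum_{i=1}^n E(X_i^*)^4}},$$
where $Y = \sum_i \mu_i + \sigma Z$ is the Gaussian surrogate of $M(X)$. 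The same argument applied to the adjacent vector $X'$ gives an analogous bound against a Gaussian surrogate $Y'$.

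Next I would invoke Fact~\ref{factDwork} on the pair $(Y, Y')$: by the sensitivity assumption $|EY - EY'| \leqslant \Delta$, and by construction both have standard deviation approximately $\sigma$ (they differ by at most one summand, whose variance is negligible against $\sigma^2$). Choosing $\epsilon = \sqrt{\Delta^2 \ln(n)/\sigma^2}$ forces $c = \sigma\epsilon/\Delta = \sqrt{\ln n}$, and then a careful two-sided Gaussian tail estimate (tighter than the $1.25\,e^{-c^2/2}$ slack in Fact~\ref{factDwork}) shows that the Gaussian mechanism guarantees $P(Y \in B) \leqslant e^{\epsilon} P(Y' \in B) + \tfrac{4}{5\sqrt{n}}$ for every Borel $B$.

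To finish, I would chain the two approximations via the triangle inequality. For any set $B$ expressible as a finite union of intervals,
\begin{align*}
P(M(X) \in B) &\leqslant P(Y \in B) + 2\, d_K(M(X), Y) \\
&\leqslant e^{\epsilon} P(Y' \in B) + \tfrac{4}{5\sqrt{n}} + 2\, d_K(M(X), Y) \\
&\leqslant e^{\epsilon} P(M(X') \in B) + 2\, d_K(M(X), Y) + 2 e^{\epsilon} d_K(M(X'), Y') + \tfrac{4}{5\sqrt{n}},
\end{align*}
where the extra factor of $2$ arises because each interval endpoint costs one $d_K$. Plugging in the Stein/Fact~\ref{factStein} bounds for $d_K(M(X),Y)$ and $d_K(M(X'),Y')$ and combining the $(1 + e^{\epsilon})$ prefactor with the $2\,(2/\pi)^{1/4}$ factor yields precisely $c(\epsilon) = 2(1+e^{\epsilon})(2/\pi)^{1/4}$.

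I expect the main obstacle to be justifying that the variance shift caused by adjacency can be absorbed into the $\tfrac{4}{5\sqrt{n}}$ slack, since Fact~\ref{factDwork} nominally assumes a common Gaussian variance for both $Y$ and $Y'$; and, as in Theorem~\ref{ThmInd}, lifting the Kolmogorov-based comparison, which natively controls only half-lines, to the $\mathcal{B}$-valued guarantee of Definition~\ref{SDP-DEF}. The first can be handled by noting that the variance difference is at most $\mathrm{Var}(X_j)$ for the single changing summand, which is of lower order than $\sigma^2$; the second is exactly the issue that motivates the factor of $2$ inside $c(\epsilon)$, and I would resolve it identically to the independent case.
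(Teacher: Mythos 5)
Your overall strategy (Stein's method via Fact~\ref{thmStein}, conversion from Wasserstein to Kolmogorov distance via Fact~\ref{factStein}, the Gaussian shift property of Fact~\ref{factDwork} with $c=\sqrt{\ln n}$, then a triangle-inequality sandwich producing the $(1+e^{\epsilon})$ prefactor) is the same as the paper's. The one material divergence is how you handle adjacency, and it is where your argument has a gap. You introduce \emph{two} Gaussian surrogates $Y$ and $Y'$, one for $M(X)$ and one for $M(X')$, and so you must (i) apply Fact~\ref{factDwork} to a pair of Gaussians with \emph{different} variances, and (ii) bound $d_K(M(X'),Y')$ by the same moment expression that bounds $d_K(M(X),Y)$. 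Neither is free: Fact~\ref{factDwork} is a statement about a \emph{single} Gaussian $Z$ added to two shifts $u,v$, and your proposed fix (absorbing the variance shift into the $\tfrac{4}{5\sqrt{n}}$ slack because $\mathrm{Var}(X_j)=o(\sigma^2)$) is asserted rather than proved; moreover the adjacent vector $X'$ has its own $\sigma'$, its own moment sums and its own dependency neighborhoods, so your final bound would involve quantities of $X'$ that do not appear in the stated $\delta$. The paper avoids both problems by using a \emph{single} normal approximation: it writes the adjacency as a comparison of $P(M(X)\in B_u)$ against $P(M(X)\in B_v)$ for shifted sets $B_u,B_v$ with $|u-v|\leqslant\Delta$, approximates the one variable $M(X)/\sigma$ by the one standard Gaussian $Z$, applies Fact~\ref{factDwork} to $\sigma Z$ at the two shifts, and converts back --- so the same $d_K\left(\frac{M(X)}{\sigma},Z\right)$ is paid once on the way in and once (times $e^{\epsilon}$) on the way out, giving exactly $2d_K(1+e^{\epsilon})+\tfrac{4}{5\sqrt{n}}$ with only $X$'s moments appearing.

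Two smaller points. First, your claim that $P(M(X)\in B)\leqslant P(Y\in B)+2d_K$ holds for any finite union of intervals with ``each endpoint costing one $d_K$'' is internally inconsistent (a union of $k$ intervals has $2k$ endpoints and would cost $2k\,d_K$); the paper writes the same factor of $2$ without comment, so this weakness is shared, but you should not present it as resolved for general finite unions. Second, you carry the constant $\sqrt{28}$ from Fact~\ref{thmStein}, whereas the theorem states $\sqrt{26}$; this discrepancy is already present in the paper, and your version is the one consistent with the cited fact, so it is not a fault of your argument, but it does mean your chain does not literally produce the displayed $\delta$.
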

Proof of this theorem is presented in the Appendix. Note that we denote $\sigma^2 = Var(\sum_{i=1}^n{X_i})$ in contrast to $\sigma^2 = \frac{\sum_{i=1}^n{Var(X_i)}}{n}$ as in previous section.

\section{Adversary with Auxiliary \\ Information}\label{ssect:Aux}

So far we have not discussed auxiliary information of the adversary, namely we assumed that the adversary only knows the correct distribution of the data vector. We would like to extend our results from Subsections~\ref{ssect:IND} and~\ref{ssect:DEP} to take into account the adversary's knowledge about the exact values of at most fraction $\gamma$ of users. Let us assume that the auxiliary information of the adversary consists of all records (values) of a subset $\Gamma$ of the data. Let $|\Gamma| = \gamma \cdot n$. Instead of $n$ users contributing to adversarial uncertainty, we will have $(1-\gamma)\cdot n$ users who, due to randomness in their data, make the aggregated value private. This is stated in the following observation

\begin{observation}\label{remarkAux}
Let us consider an adversary with knowledge of exact values of all records of a subset $\Gamma$ of the data. Let $|\Gamma| = \gamma \cdot n$. Then all previous theorems from this paper can be easily adapted to such an adversary by considering data of size $(1-\gamma)n$ instead of $n$ contributing to randomness. This essentially captures the fact that all other users (about whom adversary has no information) still contribute to the randomness of the query. Moreover, if we assume that the adversary has auxiliary information about every record of the data (that is $|\Gamma| = n$) then this model collapses to standard differential privacy, where no uncertainty comes from the data itself. This shows that indeed the standard differential privacy is a special, most pessimistic, case of this model.
\end{observation}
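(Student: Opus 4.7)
The plan is to reduce the auxiliary-information setting to the scenarios already handled in Sections~\ref{ssect:IND} and~\ref{ssect:DEP} by separating the data into its known and random parts. Write $\bar\Gamma = \{1,\ldots,n\} \setminus \Gamma$ and condition on the adversary's side information $X_\Gamma = x_\Gamma$. Then
$$
M(X) \;=\; \sum_{i \in \Gamma} x_i \;+\; \sum_{i \in \bar\Gamma} X_i \;=\; S + T,
$$
where $S$ is a constant from the adversary's viewpoint and $T$ is a random sum over the $|\bar\Gamma| = (1-\gamma)n$ remaining entries. For any Borel set $B$ we have $P(M(X) \in B \mid X_\Gamma = x_\Gamma) = P(T \in B - S \mid X_\Gamma = x_\Gamma)$, so noiseless privacy of $M$ against an $Adv_X(D,\gamma)$-adversary reduces to noiseless privacy of the shifted sum $T$ against an adversary that knows only its (conditional) distribution.

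Next I would revisit Definition~\ref{adjVec} in this conditioned world. An adjacent vector $X'$ differs from $X$ in exactly one coordinate $i$. For $i \in \bar\Gamma$, the modification acts entirely inside $T$, so the arguments of Theorems~\ref{THMBin}, \ref{ThmInd}, and~\ref{THMDep} carry over verbatim with $n$ replaced by $(1-\gamma)n$, the variance recomputed as $\mathrm{Var}(T)$, the third and fourth absolute central moments summed only over $\bar\Gamma$, and in the locally dependent case the dependency neighborhoods restricted to $\bar\Gamma$ (which can only decrease the parameter $D$). The case $i \in \Gamma$ is inherently uncoverable by adversarial uncertainty since the coordinate is already revealed, so protecting it would require external noise, which is precisely the standard differential privacy regime. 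Letting $\gamma \to 1$ therefore eliminates all data randomness and makes the second case the only one, recovering classical DP as the most pessimistic boundary of the model.

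The main delicate point worth flagging is that coordinates in $\bar\Gamma$ may be statistically dependent on coordinates in $\Gamma$, so conditioning on $x_\Gamma$ could shift the marginals, variance, and higher moments of the entries of $T$. In that case the parameters plugged into Theorems~\ref{ThmInd} and~\ref{THMDep} must be interpreted as the \emph{conditional} quantities given $x_\Gamma$, and one takes a worst-case supremum over admissible side information values. If $\Gamma$ is a union of dependency neighborhoods (equivalently, if no $N_i$ with $i \in \bar\Gamma$ intersects $\Gamma$), then no adjustment is needed and the direct substitution $n \mapsto (1-\gamma)n$ yields the clean statement of the observation.
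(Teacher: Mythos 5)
Your proposal matches the paper's route: the paper likewise argues (in the proofs of Theorems~\ref{THMAux1} and~\ref{THMAux2}) that only the uncompromised users contribute to the randomness, so one simply reruns the Berry--Esseen/Stein arguments with $n$ replaced by $(1-\gamma)n$ and with the variance and the third/fourth central moments summed over $[n]\setminus\Gamma$ only. Your explicit decomposition $M(X)=S+T$ after conditioning on $X_\Gamma=x_\Gamma$ is just a cleaner way of saying the same thing, and your remark that the case $|\Gamma|=n$ leaves no data randomness is exactly how the paper justifies the collapse to standard differential privacy.

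One point where you go beyond the paper, to your credit: in the locally dependent case, entries of $\bar\Gamma$ may be correlated with entries of $\Gamma$, so the adversary's posterior on $T$ is the \emph{conditional} law given $x_\Gamma$, whose variance and moments need not equal the unconditional ones restricted to $\bar\Gamma$. The paper does not address this; it only worst-cases over \emph{which} subset of size $\gamma n$ is compromised (choosing the one whose removal minimizes the remaining variance), which handles ignorance of $\Gamma$ but not the shift of conditional moments. Your proposed fix --- interpreting the plugged-in quantities as conditional moments and taking a supremum over admissible $x_\Gamma$, with the clean substitution valid when $\Gamma$ is a union of dependency neighborhoods --- is the correct way to make Theorem~\ref{THMAux2} rigorous, and is a genuine (if small) improvement over the argument as printed.
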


Let us first introduce an extension to Theorem~\ref{ThmInd}, which takes into account the adversary's knowledge about the exact values of fraction of users.

\begin{theorem}\label{THMAux1}
Let $X = (X_1,\ldots,X_n)$ be a data vector, where $X_i$ are independent random variables. Denote set of all indexes by $[n]$. Assume that adversary knows the exact values of at most fraction $\gamma$ of users. Denote the set of indexes of compromised users by $\Gamma$, where $|\Gamma| = \gamma n$. Let $\mu_i = EX_i$ and $\sigma_\Gamma^2 = \frac{\sum_{i \in [n]\setminus\Gamma}{Var(X_i)}}{(1-\gamma)n}$ and $E|X_i|^3 < \infty$ for every $i \in \{1,\ldots,n\}$. Consider mechanism $M(X) = \sum_{i=1}^{n}(X_i)$. We denote data sensitivity of vector $X$ and mechanism $M$ as $\Delta$. $M(X)$ is $(\epsilon,\delta)$-NP with following parameters
$$
\epsilon = \sqrt{\frac{\Delta^2\ln ((1-\gamma)n)}{(1-\gamma)n\sigma_\Gamma^2}},
$$
and
$$
\delta = \frac{1.12 \sum_{i \in [n]\setminus\Gamma}{E|X_i-\mu_i|^3}}{\left(\sum_{i \in [n]\setminus\Gamma}{Var(X_i)}\right)^{\frac{3}{2}}}(1+e^{\epsilon})+\frac{4}{5\sqrt{n}}.
$$
\end{theorem}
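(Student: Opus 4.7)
The strategy is to reduce the problem to Theorem~\ref{ThmInd} applied to the ``reduced'' data vector $\tilde X = (X_i)_{i\in[n]\setminus\Gamma}$ consisting of the $(1-\gamma)n$ non-compromised coordinates. From the adversary's viewpoint, the coordinates in $\Gamma$ are deterministic constants that he knows, so we can write
$$
M(X) = c_\Gamma + \tilde M(\tilde X), \qquad c_\Gamma = \sum_{i\in\Gamma} X_i,
$$
where $\tilde M(\tilde X) = \sum_{i\notin\Gamma} X_i$ is a sum of $(1-\gamma)n$ independent random variables whose average variance equals the $\sigma_\Gamma^2$ of the theorem statement.

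Adjacency between $X$ and $X'$ then splits into two cases depending on where the added or removed coordinate lies. If the differing index lies outside $\Gamma$ then $c_\Gamma = c'_\Gamma$, while $\tilde X$ and $\tilde X'$ are themselves adjacent random vectors of length $(1-\gamma)n$ in the sense of Definition~\ref{adjVec}, with data sensitivity still bounded by $\Delta$. Direct application of Theorem~\ref{ThmInd} to $\tilde M(\tilde X)$, with $n$ replaced by $(1-\gamma)n$ and $\sigma^2$ by $\sigma_\Gamma^2$, immediately yields the stated bounds. If the differing index lies in $\Gamma$ then $\tilde X = \tilde X'$ and only the deterministic part changes, with $|c_\Gamma - c'_\Gamma| \leqslant \Delta$. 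Here the privacy comparison reduces to bounding $P(c_\Gamma + \tilde M(\tilde X) \in B)$ against $P(c'_\Gamma + \tilde M(\tilde X) \in B)$, which I would handle by first invoking Berry--Esseen (Fact~\ref{factBE}) to replace $\tilde M(\tilde X)$ by a Gaussian of variance $\sum_{i\notin\Gamma}\mathrm{Var}(X_i)$ at the corresponding Kolmogorov distance, and then applying Fact~\ref{factDwork} to absorb the deterministic shift by $\Delta$ through the Gaussian mechanism.

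Both cases produce the same explicit $\epsilon$ and $\delta$ formulas. The choice $\epsilon = \sqrt{\Delta^2 \ln((1-\gamma)n)/((1-\gamma)n\sigma_\Gamma^2)}$ is the one optimized inside the proof of Theorem~\ref{ThmInd}, and the corresponding $\delta$ absorbs the Berry--Esseen approximation error, multiplied by $(1+e^\epsilon)$ from the two Chernoff tails, plus the residual $4/(5\sqrt{n})$ term inherited from the same proof. The main obstacle is the bookkeeping in the second case: one must check that replacing $\tilde M(\tilde X)$ by a Gaussian commutes with the deterministic shift $c_\Gamma$, which follows from translation invariance of Kolmogorov distance but deserves explicit verification so that the same $\delta$ formula covers both adjacency cases uniformly. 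A minor point to handle along the way is that the noiseless privacy inequality must be established pointwise in the realization of $X_\Gamma$ and then averaged, which works because independence of the $X_i$ is preserved under conditioning on any subset of them.
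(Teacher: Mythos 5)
Your proposal follows essentially the same route as the paper: the paper's own proof is a short remark that the argument of Theorem~\ref{ThmInd} goes through verbatim with only the non-compromised users contributing variance to the Berry--Esseen/Gaussian step, which is exactly your reduction to the vector $\tilde X$. Your explicit split of the adjacency cases (differing index inside versus outside $\Gamma$) and the observation that the deterministic shift $c_\Gamma$ is absorbed by Fact~\ref{factDwork} is more careful bookkeeping than the paper provides, but it is the same argument.
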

\begin{proof}
Proof of this theorem is analogous to proof of Theorem~\ref{ThmInd}, with the single difference that only non-compromised users contribute to the randomness, namely variance of the sum consists of the uncompromised users variance. Therefore when using Berry-Esseen theorem the sum weakly converges to normal distribution with smaller variance than in the case where $\gamma = 0$. Note that in the proof we assume that we know which subset of users is compromised. This might obviously be unknown to the data owner, so we can assume the worst case, namely that the compromised subset $\Gamma$ is the subset of size $\gamma n$ with the greatest variance. This might be checked by the owner (which such subset has the greatest variance) and then the theorem holds, no matter which users are really compromised.
\end{proof}
Similarly we can introduce an extension to Theorem~\ref{THMDep}

\begin{theorem}\label{THMAux2}
Let $X=(X_1,\ldots,X_n)$ be a data vector. Denote set of all indexes by $[n]$. Assume that adversary knows the exact values of at most fraction $\gamma$ of users. Denote the set of indexes of compromised users by $\Gamma$, where $|\Gamma| = \gamma n$. We consider mechanism $M(X) = \sum_{i=1}^{n}(X_i)$. Let $EX_i = \mu_i$ and $EX_i^4 < \infty$. Suppose there are dependency neighborhoods $N_i$, $i \in \{1,\ldots,n\}$, where $X = \max_{1 \leqslant i \leqslant n}{|N_i|}$. Let $\sigma_\Gamma^2 = Var(X \setminus \Gamma)$. If the data sensitivity is $\Delta$ then $M(X)$ is $(\epsilon,\delta)$-NP with following parameters
$$
\epsilon = \sqrt{\frac{\Delta^2 \ln((1-\gamma)n)}{\sigma_\Gamma^2}},
$$
and
$$
\delta = c(\epsilon) \sqrt{\frac{D^2}{\sigma_\Gamma^3}M_X^3 + \frac{D^\frac{3}{2} \sqrt{26}}{\sigma^2\sqrt{\pi}} \sqrt{M_X^4}} + \frac{4}{5\sqrt{(1-\gamma)n}},
$$ 
where 
$$
M_X^3 = \sum_{i \in [n]\setminus\Gamma}{E|X_i - \mu_i|^3}
$$
$$
M_X^4 = \sum_{i \in [n]\setminus\Gamma}{E\left(X_i - \mu_i\right)^4}
$$ 
and
$$
c(\epsilon) = 2(1+e^{\epsilon}) \left(\frac{2}{\pi}\right)^{\frac{1}{4}}.
$$
\end{theorem}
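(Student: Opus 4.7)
The plan is to reduce this statement to Theorem~\ref{THMDep} applied to the restricted data vector $X^c := (X_i)_{i \in [n]\setminus \Gamma}$ of length $(1-\gamma)n$, using the observation already made for Theorem~\ref{THMAux1}: the auxiliary values merely freeze the corresponding coordinates into constants and strip them from the randomness budget.

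First, I would rewrite the mechanism from the adversary's point of view as
$$
M(X) = S_\Gamma + \sum_{i \in [n]\setminus \Gamma} X_i, \qquad S_\Gamma := \sum_{i \in \Gamma} X_i,
$$
where $S_\Gamma$ is a known constant. A deterministic translation by $S_\Gamma$ preserves noiseless privacy, since for every measurable set $B$ we have $P(M(X)\in B) = P(\sum_{i\in[n]\setminus\Gamma} X_i \in B - S_\Gamma)$ and the analogous identity holds for any adjacent vector $X'$; the ratios in Definition~\ref{SDP-DEF} are therefore unchanged. Moreover, adjacency of $X$ and $X'$ only matters on the uncompromised coordinates: if the single modified index lies in $\Gamma$, the constant $S_\Gamma$ shifts by at most $\Delta$ on \emph{both} sides and the two distributions are mere translates of each other, so the privacy inequality holds trivially. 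Hence we may assume the added or removed coordinate is in $[n]\setminus\Gamma$, and the sensitivity of the reduced mechanism with respect to the restricted vector is still bounded by $\Delta$.

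Second, I would check that the dependency structure transfers cleanly. The dependency neighborhoods of $X$ induce dependency neighborhoods on $X^c$ by restriction ($N_i \cap ([n]\setminus\Gamma)$ for $i \in [n]\setminus\Gamma$), and these satisfy $|N_i \cap ([n]\setminus\Gamma)| \leqslant D$. The variance of the randomized part of the sum is $\sigma_\Gamma^2$ and the relevant central moment sums are exactly $M_X^3$ and $M_X^4$. Plugging into Theorem~\ref{THMDep} with $(1-\gamma)n$ in place of $n$, variance $\sigma_\Gamma^2$, and moment sums $M_X^3, M_X^4$ yields precisely the stated formulas for $\epsilon$ and $\delta$ (including the $\tfrac{4}{5\sqrt{(1-\gamma)n}}$ tail term, which arose in the proof of Theorem~\ref{THMDep} from a Chebyshev-type truncation on the effective number of contributing variables).

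The main subtlety, rather than a technical obstacle, is that the data owner does not know \emph{which} subset $\Gamma$ is compromised. As in the proof of Theorem~\ref{THMAux1}, the statement should therefore be read in the worst-case sense: one takes $\Gamma$ of cardinality $\gamma n$ that minimizes $\sigma_\Gamma^2$ (equivalently, maximizes the resulting $\delta$) over all admissible subsets, which can be computed a priori from the per-user variances and moments. With this convention, the bounds hold uniformly over the adversary's choice of $\Gamma$ consistent with $Adv_X(D,\gamma)$, completing the reduction.
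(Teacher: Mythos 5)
Your reduction to Theorem~\ref{THMDep} on the restricted vector---freezing the compromised coordinates into a known constant, keeping only the uncompromised variance and central moments, and passing to the worst-case $\Gamma$ of size $\gamma n$---is exactly the argument the paper intends; its own proof is only a two-sentence sketch of this same reduction, and you supply details it omits (translation invariance of the NP inequality, the adjacency case split, the restriction of dependency neighborhoods). The one quibble is your claim that an adjacent change at an index inside $\Gamma$ is handled ``trivially'' because the two laws are translates by at most $\Delta$: exact translates do not trivially satisfy Definition~\ref{SDP-DEF} (a point mass is a counterexample), and this case still needs the Stein/Gaussian step of Fact~\ref{factDwork}---but that is precisely the $B_u$ versus $B_v$ comparison established in the proof of Theorem~\ref{THMDep}, so the conclusion stands.
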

\begin{proof}
Here also the proof is analogous to the proof of Theorem~\ref{THMDep}, and also the difference is that only non-compromised users contribute to the randomness, namely variance of the sum consists of the uncompromised users variance. When we bound the Kolmogorov distance (using Stein method) between the sum and a normal distribution, we use one with smaller variance (namely variance of $X \setminus \Gamma$) than in the case where $\gamma = 0$. As in the previous theorem, a practitioner can assume the worst case, namely that the compromised subset $\Gamma$ is the subset of size $\gamma n$ with the greatest variance.
\end{proof}
These simple extensions of our previous theorem give us a complete insight into noiseless privacy in adversarial model presented in Subsection~\ref{ssect:AdvModel}. The owner of the data (or any party responsible for the privacy in central or distributed database) can give his users a rigorously proved guarantee that as long as at most a fraction $\gamma$ of users is compromised and (in dependent case) if the size of the greatest dependent subset is at most $D$, then the privacy parameters at least as good (we have shown the upper bound for the parameters) as given in Theorem~\ref{THMAux1} if the data is independent or Theorem~\ref{THMAux2} if there are dependencies (known to adversary) in the data.

\section{Synergy between Adversarial \\ Uncertainty and Noise Addition}\label{ssect:AddNoise}

In previous sections we have shown what are the privacy parameters for the randomness inherently present in the data. However, it is easy to imagine that in many cases the amount of randomness (adversarial uncertainty) might be too small to ensure desired size of privacy parameters. Does it mean that in such case we have to step back and use only standard differential privacy methods? Fortunately, it does not. It turns out that the proofs of our theorems are constructed in such a way, that it is possible to extend them to the case where we add some noise to increase the randomness in the data. Even more importantly, it is also easy to quantify how much noise has to be added to improve privacy of the data to the desired parameter in our adversarial model.

To the best of authors knowledge, so far there has not been any approach in the privacy literature to combine the idea of utilizing adversarial uncertainty (randomness in data) and standard approach which is adding appropriately calibrated noise. The idea of adding noise to already somewhat random data is quite simple, yet it needs to be carefully analysed so that one may know exactly how much does it enhance the privacy. It is intuitively very natural to think that the more randomness is present in the data, the less noise (or none, if the randomness itself is enough) we have to add to satisfy desired level of privacy. However, to become a state-of-the-art approach to preserving privacy, this intuition has to be formally introduced, rigorously quantified and proved.

We now introduce a following

\begin{theorem}\label{THMDosypywanie}
Let $X=(X_1,\ldots,X_n)$ be a data vector, the data sensitivity is $\Delta$ and $Var(\sum_{i=1}^n X_i) = \sigma^2$. We consider mechanism $M(X)$ which, due to adversarial uncertainty has certain privacy parameters $(\epsilon_1,\delta)$. We can improve this parameter by adding unbiased noise of variance $\sigma_{\xi}^2$. We show that $M^*(X) = M(X+\xi)$ where $\xi$ is noise (namely random variable such that $E\xi = 0$ and $Var(\xi) = \sigma_\xi^2$) preserves privacy with parameters $(\epsilon,\delta)$, where 
$$
\epsilon = \sqrt{\frac{\Delta^2 \ln(n)}{\sigma^2 + \sigma_\xi^2}}.
$$
\end{theorem}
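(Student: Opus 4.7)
The plan is to recognise the statement as a direct reapplication of Theorem~\ref{ThmInd} (or Theorem~\ref{THMDep}, in the locally dependent case) to a noise-augmented aggregate, exploiting two elementary observations: (i) since $\xi$ is independent of $X$ and does not depend on any particular record $X_i$, the data sensitivity of $M^*(X) = M(X) + \xi$ equals that of $M(X)$, namely $\Delta$ (adjacency in the sense of Definition~\ref{adjVec} modifies an $X_i$ but leaves $\xi$ untouched); and (ii) by independence, $\mathrm{Var}(M^*(X)) = \sigma^2 + \sigma_\xi^2$.

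First I would reformulate $M^*(X)$ as $\sum_{i=1}^n X_i + \xi$ and treat $\xi$ either as a single additional independent summand or as $n$ i.i.d.\ noise increments of variance $\sigma_\xi^2/n$ (the choice is immaterial for the argument). Second, I would rerun the normal-approximation step of the proof of Theorem~\ref{ThmInd}: Berry-Esseen (Fact~\ref{factBE}) yields that the CDF of $M^*(X)$ is uniformly close to that of a Gaussian with variance $\sigma^2 + \sigma_\xi^2$, with error controlled by the third absolute centred moments of the $X_i$ together with that of $\xi$. Third, I would invoke Fact~\ref{factDwork} with sensitivity $\Delta$ and calibration $c = \sqrt{\ln n}$; the Gaussian privacy requirement $\sigma \geq c\Delta/\epsilon$ becomes $\sqrt{\sigma^2 + \sigma_\xi^2} \geq \sqrt{\ln n}\,\Delta/\epsilon$, which rearranges to exactly
$$
\epsilon = \sqrt{\frac{\Delta^2 \ln n}{\sigma^2 + \sigma_\xi^2}},
$$
as claimed. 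The corresponding $\delta$ inherits the same structural form as in Theorem~\ref{ThmInd} (the Berry-Esseen residual plus the Gaussian slack), now evaluated with the enlarged variance and including the moment contribution of $\xi$; in the locally dependent regime one replaces Berry-Esseen by the Stein bound of Fact~\ref{thmStein} and invokes Theorem~\ref{THMDep} in the same way.

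The main obstacle is essentially bookkeeping: one must verify that the chosen noise $\xi$ has finite third (and, in the dependent variant, fourth) absolute centred moments so that the normal-approximation machinery remains applicable. This is automatic for every standard noise distribution (Gaussian, Laplace, uniformly bounded perturbations, and so on), after which the proofs of Theorems~\ref{ThmInd} and~\ref{THMDep} transfer essentially verbatim with $\sigma^2$ replaced throughout by $\sigma^2 + \sigma_\xi^2$. A secondary technical subtlety worth highlighting is that, in order to quantify the improvement explicitly, one should express the moment terms of $\xi$ in closed form (for Gaussian or Laplace noise this is immediate), so that a practitioner can solve for the smallest $\sigma_\xi^2$ needed to reach any target $\epsilon$ given the inherent randomness already present in $X$.
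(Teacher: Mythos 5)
Your proposal is correct and follows essentially the same route as the paper's own (much terser) proof: since $\xi$ is independent of the data, the variance of the noise-augmented sum is $\sigma^2+\sigma_\xi^2$, so the Gaussian target in the Berry--Esseen/Stein approximation has larger variance and Fact~\ref{factDwork} yields the smaller $\epsilon$. If anything you are more careful than the paper, which silently keeps $\delta$ unchanged whereas you correctly note that the approximation residual should in principle also account for the moments of $\xi$.
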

\begin{proof}
First formula is very easy to obtain from our previous proofs. Similarly as in Theorems~\ref{THMAux1} and~\ref{THMAux2} one can easily see that the sum of data with added noise has variance $\sigma^2 + \sigma_\xi^2$, because the noise is independent from data. Therefore appropriate normal random variables to which we bound the distance of our sum (as in Berry-Esseen theorem and Stein method) will have greater variance, which in turn gives smaller epsilon.
\end{proof}

This approach is quite similar as in the case where the adversary has information about exact values of some fraction of the data, but this time we add variance instead of subtracting it. Improving $\delta$ parameter by adding noise seems to be more difficult, as it might require different approach to previous theorems. We leave it as an interesting problem for future work. After this theorem we can also present an useful observation

\begin{observation}\label{ObsDosypywanie}
We can state Theorem~\ref{THMDosypywanie} in a different way, namely for a fixed privacy parameter $\epsilon$, we obtain that necessary variance of the noise to obtain desired level of privacy is
$$
\sigma_\xi^2 = \max\left(\frac{\Delta^2 \ln(n) - \epsilon^2 \sigma^2}{\epsilon^2},0\right).
$$
\end{observation}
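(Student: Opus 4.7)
The plan is to derive the observation as a direct algebraic inversion of the formula supplied by Theorem~\ref{THMDosypywanie}. Starting from
$$
\epsilon = \sqrt{\frac{\Delta^2 \ln(n)}{\sigma^2 + \sigma_\xi^2}},
$$
I would square both sides, giving $\epsilon^2 (\sigma^2 + \sigma_\xi^2) = \Delta^2 \ln(n)$, and then solve explicitly for $\sigma_\xi^2$ to obtain $\sigma_\xi^2 = (\Delta^2 \ln(n) - \epsilon^2 \sigma^2)/\epsilon^2$. This already produces the expression inside the maximum.

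The only non-mechanical step is justifying the $\max(\cdot,0)$ clamp. Here I would argue as follows: if $\epsilon^2 \sigma^2 \geqslant \Delta^2 \ln(n)$, then the inherent randomness of the data already suffices to meet the target privacy parameter $\epsilon$ via Theorem~\ref{THMDosypywanie} applied with $\sigma_\xi = 0$, so no additional noise needs to be added; in this regime the formula above would yield a non-positive value for $\sigma_\xi^2$, which is meaningless and is correctly replaced by zero. Conversely, if $\epsilon^2 \sigma^2 < \Delta^2 \ln(n)$, the solved expression is strictly positive and, by construction, it is the smallest noise variance for which the square-root bound of Theorem~\ref{THMDosypywanie} meets the target $\epsilon$.

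I do not expect any genuine obstacle: the statement is a one-line reformulation that rearranges the conclusion of Theorem~\ref{THMDosypywanie} from the ``variance determines $\epsilon$'' form into the ``$\epsilon$ determines required variance'' form. The only thing worth flagging for the reader is that the parameter $\delta$ in Theorem~\ref{THMDosypywanie} is not improved by this procedure (as the authors themselves note), so the observation should be understood as giving the noise budget needed to hit a prescribed $\epsilon$ while leaving $\delta$ unchanged.
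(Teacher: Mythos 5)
Your proposal is correct and matches the paper's argument, which simply invokes Theorem~\ref{THMDosypywanie} together with ``straightforward algebraic manipulations''; your squaring-and-solving derivation, plus the justification of the $\max(\cdot,0)$ clamp, is exactly that computation spelled out. Nothing is missing.
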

\begin{proof}
This observation is obtained from Theorem~\ref{THMDosypywanie} and quite straightforward algebraic manipulations.
\end{proof}

We also give more specific observation concerning noise having Laplace distribution, which is a common technique in standard differential privacy approach (see for example~\cite{DworkAlgo})

\begin{observation}
Let $X=(X_1,\ldots,X_n)$ be a data vector, the data sensitivity is $\Delta$ and $Var(\sum_{i=1}^n X_i) = \sigma^2$. We consider mechanism $M(X)$ which, due to adversarial uncertainty has certain privacy parameters $(\epsilon_1,\delta)$. We show that $M^*(X) = M(X+\xi)$ where $\xi \sim Lap(\frac{\Delta}{\epsilon_2})$ preserves privacy with parameters $(\epsilon,\delta)$, where 
$$
\epsilon = \sqrt{\frac{\epsilon_1^2 \cdot \epsilon_2^2 \cdot \ln(n)}{2\epsilon_1^2+\epsilon_2^2 \ln(n)}}.
$$
\end{observation}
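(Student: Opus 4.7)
The plan is to reduce this observation directly to Theorem~\ref{THMDosypywanie} by computing the variance of the Laplace noise and then performing an algebraic simplification using the value of $\epsilon_1$ from the noiseless privacy regime.

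First, I would recall that if $\xi \sim Lap(\Delta/\epsilon_2)$, then $E\xi = 0$ and $Var(\xi) = 2(\Delta/\epsilon_2)^2 = 2\Delta^2/\epsilon_2^2$. Since $\xi$ is independent of the data $X$, we can take $\sigma_\xi^2 = 2\Delta^2/\epsilon_2^2$ in Theorem~\ref{THMDosypywanie}, which gives that $M^*(X) = M(X+\xi)$ preserves noiseless privacy with parameter
$$
\epsilon = \sqrt{\frac{\Delta^2 \ln(n)}{\sigma^2 + 2\Delta^2/\epsilon_2^2}}.
$$

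Next, I would use the assumption that $M(X)$ already has privacy parameter $\epsilon_1$ coming from adversarial uncertainty. By the form of $\epsilon$ appearing in Theorems~\ref{ThmInd} and~\ref{THMDep} (as well as in Theorem~\ref{THMDosypywanie} itself with no added noise), this means $\epsilon_1 = \sqrt{\Delta^2 \ln(n)/\sigma^2}$, and hence $\sigma^2 = \Delta^2 \ln(n)/\epsilon_1^2$. Substituting this expression for $\sigma^2$ into the previous formula yields
$$
\epsilon = \sqrt{\frac{\Delta^2 \ln(n)}{\Delta^2 \ln(n)/\epsilon_1^2 + 2\Delta^2/\epsilon_2^2}} = \sqrt{\frac{\ln(n)}{\ln(n)/\epsilon_1^2 + 2/\epsilon_2^2}}.
$$

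Finally, multiplying numerator and denominator by $\epsilon_1^2 \epsilon_2^2$ gives precisely
$$
\epsilon = \sqrt{\frac{\epsilon_1^2\, \epsilon_2^2 \ln(n)}{\epsilon_2^2 \ln(n) + 2\epsilon_1^2}},
$$
which matches the claimed formula. There is no real obstacle here, as the $\delta$ parameter carries over unchanged from Theorem~\ref{THMDosypywanie} and the reasoning is purely algebraic once the variance of the Laplace distribution is plugged in.
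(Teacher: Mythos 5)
Your proposal is correct and follows exactly the route the paper intends: the paper's own proof is just the one-line remark that the observation follows from Theorem~\ref{THMDosypywanie} applied to $\xi \sim Lap(\Delta/\epsilon_2)$, and you have supplied the missing details --- computing $Var(\xi) = 2\Delta^2/\epsilon_2^2$, identifying $\epsilon_1 = \sqrt{\Delta^2\ln(n)/\sigma^2}$ from the noiseless regime, and carrying out the algebraic substitution, which indeed yields the stated formula.
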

\begin{proof}
This observation is obtained by straightforward application of Theorem~\ref{THMDosypywanie} for $\xi \sim Lap(\frac{\Delta}{\epsilon_2})$.
\end{proof}

Theorem~\ref{THMDosypywanie} allows the party responsible for preserving privacy to enhance parameter $\epsilon$ of the data itself by using standard methods of differential privacy. See however, that the noise necessary to achieve the desired level of privacy is smaller than using standard differential privacy methods due to the fact, that we already have some level of privacy achieved by the randomness present in the data. We conclude our discussion concerning synergy between adversarial uncertainty and differential privacy approach by showing a following

\begin{example}\label{exDosypywanie}
\textup{We consider a data vector $X=(X_1,\ldots,X_n)$ and mechanism $M(X)$ having the data sensitivity $\Delta = 10$ and $Var(M(X)) = \sigma^2 = \frac{n}{10}$. We enhance the privacy by adding Laplace noise of variance $\sigma_\xi^2$. Using Theorem~\ref{THMDosypywanie} and Observation~\ref{ObsDosypywanie} we can compute what is the necessary variance of noise to obtain privacy parameter $\epsilon = 0.2$ depending on the number of users. See Figure~\ref{fig:addnoise}.}
\begin{figure}[h!]
    \centering
    \includegraphics[width=0.45\textwidth]{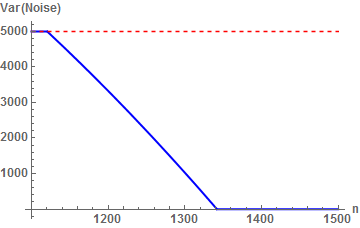}
    \caption{Example~\ref{exDosypywanie}, red dashed line shows the variance of necessary noise for Laplace mechanism using standard differential privacy approach. Blue thick line shows the variance of necessary noise after taking into account the adversarial uncertainty.}
    {\label{fig:addnoise}}
\end{figure}
\textup{See that we have also plotted the variance of noise using differential privacy approach, namely Laplace mechanism (see~\cite{DworkAlgo}). We can see that in this example, for $n$ up to around $1050$ we have to apply standard differential privacy mechanism. Moreover, for $n$ greater than approximately $1350$ we know from our previous results that noise is unnecessary, because the data has sufficient privacy parameters due to inherent randomness. Most interesting, in terms of synergy of adversarial uncertainty and differential privacy methods is the case where $n$ is between $1050$ and $1350$. Here one can see that adding significantly less noise than using standard differential privacy approach is sufficient to obtain desired parameter $\epsilon = 0.2$.}
\end{example}

To sum up all our results, we present a flowchart, which shows on high level of abstraction how should the data owner approach the problem of preserving privacy in a general manner. See Figure~\ref{fig:flowchart}.

\begin{figure}
\centering
\tikzstyle{decision} = [diamond, draw, fill=blue!20, text width=6em, text badly centered, node distance=3cm, inner sep=1pt, rounded corners]
\tikzstyle{block} = [rectangle, draw, fill=blue!20, 
    text width=5em, text centered, rounded corners, minimum height=4em]
\tikzstyle{line} = [draw, -latex']
\tikzstyle{cloud} = [draw, ellipse,fill=red!20, node distance=3cm,
    minimum height=2em]
    
\begin{tikzpicture}[node distance = 3cm, auto]
		\node [cloud] (owner) {data owner};
    \node [decision, below of=owner] (assumptions) {Any assumptions about data/adversary?};
    \node [block, right of=assumptions] (DP) {Use standard differential privacy methods (see~\cite{DworkAlgo})};
    \node [block, below of=assumptions] (init) {Initialize adversarial model for data vector (see Definition~\ref{DefModel})};
    \node [decision, below of=init] (isind) {Is data independent ($D=1$)?};
		\node [block, left of=isind] (ind) {Utilize adversarial uncertainty for independent case (use Theorem~\ref{THMAux1})};
		\node [block, right of=isind] (dep) {Utilize adversarial uncertainty for dependent case (use Theorem~\ref{THMAux2})};
		\node [decision, below of=isind] (issat) {Satisfied with privacy parameters?};
		\node [block, below of=issat] (enhance) {Enhance privacy by adding noise (use Theorem~\ref{THMDosypywanie})};
		\node [block, right of=enhance] (release) {Release aggregated statistic};
    \path [line] (assumptions) -- node {yes} (init);
    \path [line] (init) -- (isind);

		\path [line] (isind) -- node {yes} (ind);
		\path [line] (isind) -- node {no} (dep);
    \path [line,dashed] (owner) -- (assumptions);
    \path [line] (assumptions) -- node {no}(DP);
		\path [line] (ind) |- (issat);
		\path [line] (dep) |- (issat);
		\path [line] (issat) -- node {yes} (release);
		\path [line] (issat) -- node {no} (enhance);
		\path [line] (enhance) -- (release);
\end{tikzpicture}
\caption{A flowchart for privacy preserving in a general way.}\label{fig:flowchart}
\end{figure}
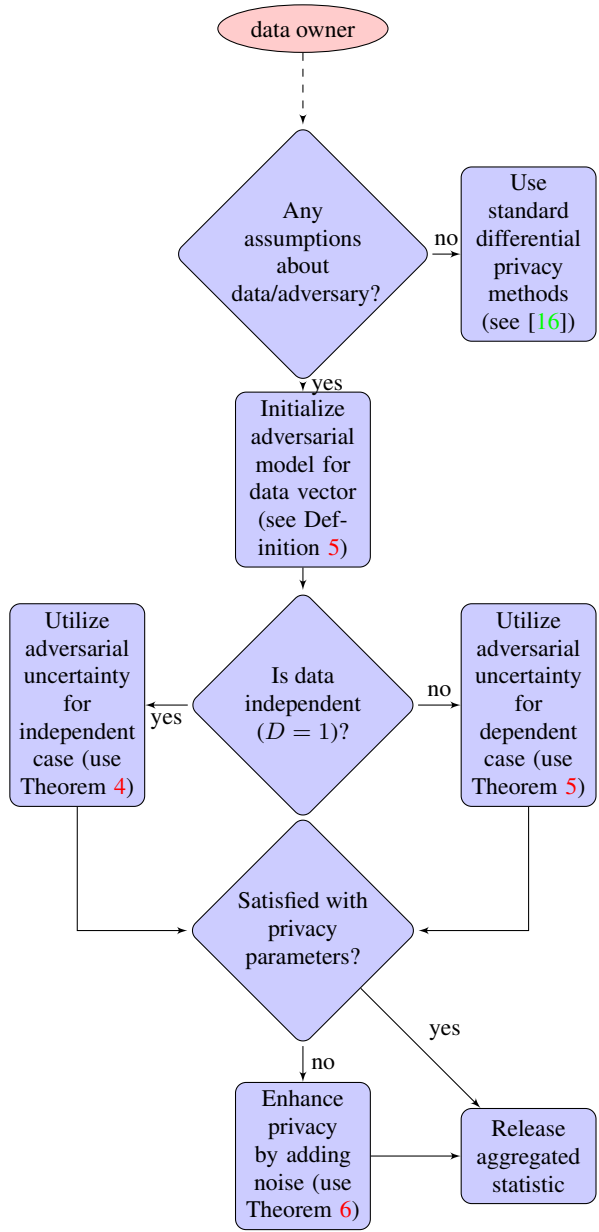

 \section{Previous and Related Work}\label{sect:prev} 

Our paper can be seen as an extension of the ideas introduced in \cite{prevDP2}. The authors of \cite{prevDP2} proposed a new insight considering relaxation of differential privacy which utilizes the uncertainty of the adversary. This was done in a contrast to standard differential privacy, which assumed that the uncertainty has to be injected by the randomized mechanism. Obviously the notion of differential privacy is quite pessimistic, as we assume that the adversary knows almost everything. In many cases it makes differential privacy unusable in practice. The necessity to add noise to the final output may render the data completely useless. Imagine situation where we want to do a taxation audit. The aggregator collects the amount of taxes paid by the individuals and then publish their sum. After adding a noise, this sum will be different than the tax due, but now we do not know whether it is because of the noise added, or if there is some tax evasion undergoing. Very similar example, and also some other, were given in \cite{prevDP2}. This might be an extreme example, but nevertheless, a big magnitude of noise (say linear of the size of the data itself) would be problematic in most practical situations. One such case is discussed in paper \cite{NaszeACNS}, where the magnitude of noises for practical cases is huge, despite good asymptotic properties of the protocol.

In our paper we use the same model as in \cite{prevDP2}. However, here it is presented in a different way, which is more convenient for our proofs. The results we give are more detailed (non-asymptotic) and easy to use in practice and concern any type of data. To the best of our knowledge, previous work in noiseless privacy and its derivatives or generalizations consisted of asymptotic analysis only. The unknown constants hidden in the big oh notation makes it difficult to construct practical algorithms. Furthermore, we also give results for data with (limited) \textbf{dependencies}, which did not appear in \cite{prevDP2} (apart from simple examples). Moreover, we showed that one can combine noiseless privacy with standard approach, namely adding some noise. It turns out that one can enhance the inherent randomness and reach desired level of privacy with less noise than using standard approach. 


There are many other papers that should be mentioned as a related work. Apart from \cite{prevDP2} there were also very interesting and important papers concerning various approaches to leveraging adversarial uncertainty in privacy, especially \cite{prevDP1,prevDP3}.

Both in \cite{prevDP1} and \cite{prevDP3} the authors proposed a frameworks (called ''coupled-worlds privacy'' and ''Pufferfish'', respectively) for specifying privacy definitions utilizing adversarial uncertainty. They could be instantiated in various ways, one of which boils down to noiseless privacy. These papers are important generalizations of ideas in \cite{prevDP2}, however the main goal of its authors is extending and generalizing privacy definitions. Our paper, on the other hand, focuses on extending the types of data which have good noiseless privacy parameters, on introducing dependencies in the data and combining noiseless privacy with standard approach. Moreover, we focus on detailed results which can be easily applied in real-life scenarios of data aggregation.

Another paper that is somehow related to this one is~\cite{chinczyki}, where the authors utilized sampling to enhance privacy. They have also given non-asymptotic privacy guarantees. However, the authors of~\cite{chinczyki} show how we can get differentially private data using k-anonymity by a simple sampling. On the other hand we consider the problem of aggregation of dependent data. We believe that such approach is more adequate for real-life scenarios. The model we investigated (revealed data) is substantially different. In particular we deal with aggregated data from (possibly) dependent sources. The authors of~\cite{chinczyki} have also proposed a theorem which is essentially very similar to our Theorem~\ref{THMBin}. Note, however, that this theorem is just a toy scenario in our paper, as we focus on any kind of data, not limiting ourselves to specific distribution. Moreover, we introduce local dependencies in the data.

Obviously, our paper is also strongly related to any work concerning data aggregation under differential privacy regime, whether the data is centralized or distributed.

Our results can for example be used in~\cite{PaniShi} 
wherein authors construct a mechanism that allows the untrusted aggregator  to learn only the intended statistics 
but no additional information. Moreover the statistics revealed to the aggregator satisfy differential privacy. 
The result is obtained by combining applied cryptography techniques to hide partial results with regular methods used for privacy preserving for the final result, which can be omitted under noiseless privacy regime, thus not introducing any errors. 

There is a long line of papers concerning similar problems as in \cite{PaniShi}, for example two other notable papers~\cite{Rastogi} and  \cite{elen}. In both of them, the authors  use a substantially different
model of security. Moreover in the latter the users communicate between each other, while in~\cite{PaniShi} as well as in our paper we assume that there is a communication between aggregator and individual users only.  

 Note that most of protocols described in these related papers  fail to provide the correct output even if only a single user abstains from sending his  share of 
the input. The solutions for dynamic networks have been presented in \cite{NaszeACNS} and \cite{Hubercik}.
Approach based on~\cite{PaniShi} was also focused on more advanced particular processing of aggregated data  (e.g., evaluation and  monetization) while keeping privacy of users 
is discussed in several papers (\cite{PS2x,PS3x,Gist,MKAP}). Another vain of protocols represent~\cite{PDA,PDA2} wherein authors present some aggregation methods that preserve privacy, however they do not consider dynamic changes inside of the network. The latter also considers data poisoning attacks, however the authors do not provide rigid proofs. In~\cite{6171193,jajodia} the authors present a framework for some aggregation functions and consider the confidentiality of the result, but leaving nodes' privacy out of scope. Clearly there are many papers discussing aggregation protocols without considering security nor privacy issues (e.g.,~\cite{unsecureAgg,tag}). There is a long list of papers devoted to fault tolerant aggregation protocols (\cite{Feng2011451,Jhumka20141789,4147120}) for significantly different settings. 

One could use the notion of noiseless privacy, especially the explicit results given in our paper, to get rid of the noise addition (thus, the error introduced in result of a query) in many protocols in papers mentioned in this section.

As a related work we shall point also a huge body  of papers dealing with differential privacy notions and their extension. The idea of differential privacy has been introduced 
for the first time in~\cite{dW1}, however its precise formulation in the widely used form  appeared in~\cite{dW2}. Most important properties have been introduced in papers~\cite{dW3,dW4}. There is a long list of papers that can be seen as a direct extension of~\cite{dW1} i.e., \cite{Gist,dW3}. In all that papers a substantially different trust model is used.  Namely there is a party called \textit{curator} that is entitled to see all participants' data in the clear and releases the computed data to wider (possibly untrusted) audience. 

Paper~\cite{Niss} presents aggregation of elements of dataset from  perspective of preserving differential privacy. The presented framework significantly differs from our approach in a few points. First of all, it uses adding noise to raw data.

An introduction to differential privacy  can be  found in~\cite{DIFFPR}. An excellent, comprehensive  description  of recent results can be found in~\cite{DworkAlgo}. 


\section{Conclusions and Further Work}\label{sect:conclusion}

We have shown an explicit bounds for privacy parameters in the case where we can utilize adversarial uncertainty. We have presented specific model of privacy (which boils down to the one given in~\cite{prevDP2}) and introduced model of the adversary. To the best of our knowledge, in the papers concerning leveraging inherent randomness in the data there were only asymptotic results so far. By showing an \textbf{explicit guarantees} for privacy parameters, we have made the whole idea more approachable in practice.

Another important contribution of this paper is approaching \textbf{dependent} data, namely using the notion of dependency neighborhoods. To the best of authors knowledge, such approach has not appeared yet in the literature concerning utilizing adversarial uncertainty to give privacy guarantees. There were some very simple cases, but here we give privacy guarantees for \textbf{any} distribution for a wide class of dependencies. Namely we only need to know the size of the largest dependent subset (or the upper bound for the size)

Moreover, we have shown the parameters regardless of the distribution of the data. The data owner only has to plug the variance of the data (or the lower bound for variance), data sensitivity (which is also necessary in standard differential privacy approach) and appropriate central moments. Then he can give a specific privacy guarantee to its users that as long as at most $\gamma$ is compromised and as long as the greatest dependent subset has size $D$. The simplicity of usage for practitioners was very important in this paper. We want these theorems to be usable not only by the privacy experts, but any specific domain experts, so we have made the theorems sort of 'off-the-shelf' formulas to use.

Furthermore, we have shown how does the standard differential privacy approach combines with the notion of inherent randomness in the data.  It turns out that the intuition that if the data is more 'random', then less noise is necessary to achieve specific privacy parameter. We formalize and quantify the level of privacy enhancement. To the best of our knowledge, such attempt was not presented before in the privacy literature. So far the only attempts were either 'all' (as in standard differential privacy methods) or 'none' (as in for example~\cite{prevDP1,prevDP2,prevDP3}). Here we give the data owner the possibility to maintain a tradeoff between these two approaches.

Some questions are still left unanswered and they might be quite interesting both from practicioner's point of view as well as for the theory. We leave them as a future work.
\begin{itemize}

\item How the database (or distributed system) designer should decide about the level of randomness in the database? In other words, even though in many papers we are given various frameworks to instantiate a specific scenario, how should the practitioner decide which instance to use? Even though we give quite a wide choice for the practitioner (he only needs upper bounds for compromised users, variance and, in dependent case, the size of greatest dependent subset) it still might be cumbersome in some cases. A general method for such a problem would be of great practical value. 

\item We hope to find an even more precise approach to connect the randomness in data with its privacy level. A promising direction is to use notion of  \textit{min entropy} notion (see e.g., \cite{INFORMM}) of data source assuming limited dependencies between values kept by users.  

\item Finding a way to improve also the $\delta$ parameter (we have already shown how to improve $\epsilon$) by adding some noise (albeit less than in standard differential privacy) might be very interesting and useful as well.

\end{itemize}

%
%
%
\bibliographystyle{abbrv}
\bibliography{bibliography}
\clearpage
\appendix \label{sect:App}

\section{Technical proofs}

\subsection{Proof of Theorem 1}
\begin{proof}
First, we will prove the following lemma.

\begin{lemma}\label{BinLemma}
Let $X \sim Bin(n,p)$. Fix an arbitrary $\lambda > 0$ such that $(np - \lambda) > 0$ and $(np+\lambda) < n$. Let $u \in [np-\lambda,np+\lambda]  \cap \mathbb{Z}$ and let $v \in \mathbb{Z}$ such that $|u - v| = 1$. We have
$$
P(X = u) \leqslant e^{\epsilon}P(X = v),
$$
where 
$$
\epsilon = \begin{cases}
\frac{\lambda}{n}\left(\frac{1}{1-p} - \frac{1}{\sqrt{\frac{\lambda}{n}}-p}\right), p \leqslant \frac{1}{2}, \\
\frac{\lambda}{n}\left(\frac{1}{p} - \frac{1}{\sqrt{\frac{\lambda}{n}}-(1-p)}\right), p > \frac{1}{2}.
\end{cases}
$$
\end{lemma}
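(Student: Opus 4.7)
The plan is to exploit the simple recursion satisfied by the binomial PMF: for $|u-v|=1$ the ratio $P(X=u)/P(X=v)$ equals either $\frac{(n-u+1)p}{u(1-p)}$ (when $v=u-1$) or $\frac{(u+1)(1-p)}{(n-u)p}$ (when $v=u+1$). Both expressions are monotone in $u$---the first decreasing, the second increasing---so on the constrained interval $u\in[np-\lambda,\,np+\lambda]$ each attains its maximum at one of the endpoints. It therefore suffices to control the two explicit quantities
$$
r_{1} \;=\; \frac{(n-np+\lambda+1)\,p}{(np-\lambda)(1-p)}, \qquad
r_{2} \;=\; \frac{(np+\lambda+1)(1-p)}{(n-np-\lambda)\,p}.
$$

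Next I would identify which of $r_{1},r_{2}$ is binding. For $p\leqslant \tfrac{1}{2}$ the binomial PMF falls off faster on the right-hand side of the mode, which makes $r_{2}$ the larger of the two; a direct algebraic comparison confirms $r_{2}\geqslant r_{1}$ throughout this regime. The case $p>\tfrac{1}{2}$ is handled symmetrically via the substitution $p \leftrightarrow 1-p$ and reduces to bounding $r_{1}$. Thus the two cases appearing in the lemma correspond precisely to these two regimes.

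Having isolated the worst-case ratio, I would bound its logarithm. A short computation gives
$$
r_{2} - 1 \;=\; \frac{\lambda + (1-p)}{p\bigl(n(1-p)-\lambda\bigr)},
$$
and the inequality $\log(1+x)\leqslant x$ then produces an upper bound on $\log r_{2}$ of order $\frac{\lambda/n}{p(1-p-\lambda/n)}$ plus lower-order corrections in $1/n$. The final step is to rewrite this as $\frac{\lambda}{n}\bigl(\frac{1}{1-p} - \frac{1}{\sqrt{\lambda/n}-p}\bigr)$, by extracting the leading contribution $\frac{\lambda/n}{1-p}$ and absorbing the remainder into the negative correction term appearing in the lemma statement.

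The main obstacle is precisely this last algebraic step: matching the exact closed form of $\epsilon$, in particular the somewhat unusual factor $\sqrt{\lambda/n}$ inside the denominator of the correction. I expect this to require a dedicated inequality rather than pure Taylor bookkeeping---likely an AM--GM split applied to $\frac{\lambda/n}{p\bigl(1-p-\lambda/n\bigr)}$, or alternatively a two-sided bound on $\log(1-x)$ that trades off between the small-$x$ and the bounded-away-from-$1$ regimes. Once the right inequality is identified the rest is routine algebra, and the case $p>\tfrac{1}{2}$ then follows by the $p\leftrightarrow 1-p$ symmetry already noted.
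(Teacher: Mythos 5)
Your skeleton is the same as the paper's: reduce to consecutive-probability ratios, use their monotonicity in $u$ to push the maximum to the endpoints of $[np-\lambda,np+\lambda]$, then bound the logarithm of the worst ratio. The gap is that your identification of the binding endpoint is reversed. For $p\leqslant\tfrac12$ the extreme ratio is $r_1$ (left endpoint, $v=u-1$), not $r_2$: with your own formulas and $n=100$, $p=0.1$, $\lambda=5$ one gets $r_1=\frac{96\cdot 0.1}{5\cdot 0.9}\approx 2.13$ versus $r_2=\frac{16\cdot 0.9}{85\cdot 0.1}\approx 1.69$. In relative terms the binomial PMF is steeper on the side nearer the boundary of its support, which for $p<\tfrac12$ is the \emph{left} side (the left ratio has the small quantity $np-\lambda$ in its denominator, the right ratio the larger $n(1-p)-\lambda$). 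As written, your argument bounds the smaller of the two ratios and leaves the actual maximum uncontrolled; moreover the algebra you sketch for $r_2$ naturally produces the $p>\tfrac12$ formula (the one with $\frac1p$ and $\frac{1}{\lambda/n-(1-p)}$), not the $p\leqslant\tfrac12$ one. Swapping the two cases repairs this; the paper does exactly that and handles $p>\tfrac12$ by the $p\leftrightarrow 1-p$ symmetry you mention.

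Two further points about the final step, which is where you anticipate trouble. First, the $\sqrt{\lambda/n}$ you are trying to reproduce is a typo in the statement: the paper's own derivation ends at $\frac{\lambda}{n}\bigl(\frac{1}{1-p}-\frac{1}{\lambda/n-p}\bigr)$ with no radical, and this is what Theorem~\ref{THMBin} actually uses (there $\lambda/n=\sqrt{\ln(2/\delta)/(2n)}$, which is where the square root originates). So no AM--GM device is needed. Second, your proposed route of applying $\log(1+x)\leqslant x$ to the whole ratio minus one provably overshoots the target: dropping the $+1$, the relevant ratio minus one equals $\frac{\lambda}{n}\cdot\frac{1}{ab}$ with $a=1-p$, $b=p-\lambda/n$ (in the $p\leqslant\tfrac12$ case), whereas the claimed $\epsilon$ is $\frac{\lambda}{n}\cdot\frac{a+b}{ab}$ with $a+b=1-\lambda/n<1$, hence strictly smaller. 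The paper instead factors the ratio as $\frac{1+\lambda/(n(1-p))}{1-\lambda/(np)}$, bounds the numerator by $\exp\bigl(\lambda/(n(1-p))\bigr)$ and the denominator via $-\log(1-x)\leqslant\frac{x}{1-x}$; the two logarithmic contributions then sum to exactly the stated $\epsilon$. You need that factorization (or an equivalent two-term split), not a single first-order bound on the full ratio.
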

\begin{proof}
We want to bound $\frac{P(X=u)}{P(X=v)}$, where $|u - v| = 1$ and $X \sim Bin(n,p)$. Furthermore, we know that $u \in [np-\lambda,np+\lambda]  \cap \mathbb{Z}$. First observe that we get the biggest ratio either for the smallest or greatest possible $u$.  Moreover, if $p \leqslant \frac{1}{2}$ we get the biggest ratio for  the smallest possible $u$.
Therefore it remains to check these two cases, calculate $\epsilon_1$ and $\epsilon_2$  and pick $\epsilon = \max(\epsilon_1,\epsilon_2)$. 

Let us begin with the case where $p \leqslant \frac{1}{2}$. Then we have $X \sim Bin(n,p)$. One can easily check that the greatest possible ratio is for $u = \lceil np-\lambda \rceil$ and $v = (u-1)$. We can bound it in the following way
\begin{align*}
\frac{P(X = \lceil np-\lambda \rceil )}{P(X = \lceil np-\lambda \rceil-1)} &= \frac{n- \lceil np-\lambda \rceil}{\lceil np-\lambda \rceil} \cdot \frac{p}{1-p} \leqslant \\
&\leqslant \frac{n-np+\lambda}{np-\lambda} \cdot \frac{p}{1-p} = \\
&= \frac{1+\frac{\lambda}{n(1-p)}}{1-\frac{\lambda}{np}} \leqslant \frac{\exp(\frac{\lambda}{n(1-p)})}{1-\frac{\lambda}{np}}.
\end{align*}
Ultimately we are interested in the natural logarithm of that ratio. We have
\begin{align*}
\epsilon_1 &= \log\left(\frac{\exp(\frac{\lambda}{n(1-p)})}{1-\frac{\lambda}{np}}\right) = \frac{\lambda}{n(1-p)} - \log\left(1-\frac{\lambda}{np}\right) \leqslant \\
&\leqslant \frac{\lambda}{n(1-p)} - 1 + \frac{1}{1-\frac{\lambda}{np}} = \lambda \left(\frac{1}{n(1-p)} + \frac{1}{np-\lambda} \right) = \\
&= \frac{\lambda}{n}\left(\frac{1}{1-p} - \frac{1}{\frac{\lambda}{n}-p} \right),
\end{align*}
where the inequality comes from the fact that $(1-\frac{1}{x}) \leqslant \log(x)$ for $x > 0$. See also that $1-\frac{\lambda}{np} > 0$, because we assumed that $(np-\lambda) > 0$. We also have $p > \frac{\lambda}{n}$ so all performed derivations are correct. Note that we picked the biggest possible ratio, so for $p \leqslant \frac{1}{2}$ it is true for every $u \in [np-\lambda,np+\lambda]  \cap \mathbb{Z}$ that
$$
\frac{P(X=u)}{P(X=v)} \leqslant e^{\epsilon_1} \iff P(X=u) \leqslant e^{\epsilon_1}P(X=v),
$$
where $|u-v| = 1$. Now let us assume that $p > \frac{1}{2}$. In that case the greatest possible ratio is for $u = (np+\lambda)$ and $v = (u+1)$. One can easily see, that we can simply consider $Bin(n,1-p)$ and apply exactly the same reasoning as before. That leaves us with
$$
\epsilon_2 = \frac{\lambda}{n}\left(\frac{1}{p} - \frac{1}{\frac{\lambda}{n}-(1-p)} \right).
$$
Similarly, we have $(1-p) > \frac{\lambda}{n}$, so there is no division by $0$. In the end, we conclude that for a fixed $\lambda$ we have the following:
$$
\epsilon = \begin{cases}
\frac{\lambda}{n}\left(\frac{1}{1-p} - \frac{1}{\sqrt{\frac{\lambda}{n}}-p}\right), p \leqslant \frac{1}{2}, \\
\frac{\lambda}{n}\left(\frac{1}{p} - \frac{1}{\sqrt{\frac{\lambda}{n}}-(1-p)}\right), p > \frac{1}{2}.
\end{cases}
$$
In the end we found $\epsilon$, which has a property that for  all \\ $u \in [np-\lambda,np+\lambda]  \cap \mathbb{Z}$ and $|u - v| = 1$ it holds that
$$
P(X = u) \leqslant e^{\epsilon}P(X = v),
$$
which concludes the proof of this lemma.
\end{proof}

Now we can continue with the proof of our Theorem. Let us begin with the first case, where $\delta$ is fixed. One obvious observation is that $M(X) \sim Bin(n,p)$. Using Chernoff bounds (see for example~\cite{concentration}) for binomial distribution we get
$$
P(M(X)\geqslant np +\lambda) + P(M(X)\leqslant np - \lambda) \leqslant 2\exp\left(-\frac{2\lambda^2}{n}\right).
$$
We want to limit the tail probability by parameter $\delta$, so we want to find a $\lambda$ such that the right side of this inequality is equal to $\delta$. This yields
$$
2\exp\left(-\frac{2\lambda^2}{n}\right) = \delta \iff \lambda = \sqrt{\frac{n\ln{\frac{2}{\delta}}}{2}}.
$$
Let us denote the set $S = \{\lceil\mu-\lambda\rceil,\ldots,\lfloor\mu+\lambda\rfloor\}$, which is exactly the support of $M(X)$ without the tails which probability we just limited by $\delta$. Now we have to find $\epsilon$ such that, apart from the tails, the following condition is satisfied
$$
\forall_{B\subset S}\left(\left|\log\left(\frac{P(M(X)\in B)}{P(M(X')\in B)}\right)\right| \leqslant \epsilon \right).
$$
It is easy to see that instead of checking all subsets of $S$, we can check only the single values, because taking a single value with a bigger ratio yields worst case bound.
For that, we can use Lemma~\ref{BinLemma}. We indeed have $M(X) \sim Bin(n,p)$. Recall that we assumed $\delta \geqslant P(M(X)=0)+P(M(X)=n)$. This means that at least $0$ and $n$ are in the tail that we already limited by $\delta$. Therefore, $(np - \lambda) > 0$ and $(np+\lambda) < n$. Applying Lemma~\ref{BinLemma} for $M(X)$ and $\lambda$ we obtain that
$$
P(M(X) = u) \leqslant e^{\epsilon}P(M(X)=v),
$$
for $u \in S$ and $|u - v| \leqslant 1$. Observe that $\frac{\lambda}{n} = \sqrt{\frac{\ln\left({\frac{2}{\delta}}\right)}{2n}}$ so from Lemma~\ref{BinLemma} we have
$$
\epsilon = \epsilon(n,p,\delta) = \begin{cases}
\sqrt{\frac{\ln\left(\frac{2}{\delta}\right)}{2n}}\left(\frac{1}{1-p} - \frac{1}{\sqrt{\frac{\ln\left(\frac{2}{\delta}\right)}{2n}}-p}\right), p \leqslant \frac{1}{2}, \\
\sqrt{\frac{\ln\left(\frac{2}{\delta}\right)}{2n}}\left(\frac{1}{p} - \frac{1}{\sqrt{\frac{\ln\left(\frac{2}{\delta}\right)}{2n}}-(1-p)}\right), p > \frac{1}{2}.
\end{cases}
$$
Now see that in our case, for $X_i \sim Bin(1,p)$ i.i.d. we have data sensitivity $1$. One can easily see that adding or removing a single data point can change the sum only by $1$. Therefore we have
$$
P(M(X) \in S) \leqslant e^{\epsilon}P(M(X') \in S)+\delta,
$$
where $X$ and $X'$ are adjacent vectors and $\epsilon = \epsilon(n,p,\delta)$. The addition of $\delta$ comes from the fact that we bound the tails of $M(X)$.

Now we assume that we have a fixed $\epsilon > 0$. Let $\alpha = e^{\epsilon}$ and $w = \frac{p}{1-p}$. We use similar reasoning as in Lemma~\ref{BinLemma}. First let us consider $p \leqslant \frac{1}{2}$. We are interested in the greatest integer $k$ smaller than $np$, which does \textbf{not} satisfy the following
$$
\frac{P(M(X) = k)}{P(M(X) = k-1)} \leqslant \alpha.
$$
We have
$$
\frac{P(M(X) = k)}{P(M(X) = k-1)} = \frac{n-k}{k} \cdot w > \alpha \iff k < \frac{nw}{\alpha + w}.
$$
Now let us pick $\lambda_k = \mu - k > \mu - \frac{nw}{\alpha+w}$. We will bound the tail using Chernoff bound
\begin{align*}
P&(M(X) \leqslant \mu - \lambda_k) \leqslant \exp\left(\frac{-2\lambda_k^2}{n}\right) < \\
&< \exp\left(\frac{-2\left( \mu- \frac{nw}{\alpha+w} \right)^2 }{n}\right) = \\
&= \exp\left(-2np^2 \left(\frac{\alpha - 1}{\alpha + w} \right)^2\right).
\end{align*}
Now we can pick $\delta_1$ in the following way
\begin{align*}
\delta_1 &= P(M(X) \leqslant \mu - \lambda_k) + P(M(X) \geqslant \mu + \lambda_k) \leqslant \\
&\leqslant 2\exp\left(-2np^2 \left(\frac{e^{\epsilon} - 1}{e^{\epsilon} + \frac{p}{1-p}} \right)^2\right).
\end{align*}
When $p > \frac{1}{2}$ we can do similar symmetric reasoning as before, we obtain
$$
\delta_2 \leqslant 2\exp\left(-2n(1-p)^2 \left(\frac{e^{\epsilon} - 1}{e^{\epsilon} + \frac{1-p}{p}} \right)^2\right).
$$
Now we pick $\delta$ which is $\max(\delta_1,\delta_2)$, so we have
$$
\delta = \begin{cases}
2\exp\left(-2np^2 \left(\frac{e^{\epsilon} - 1}{e^{\epsilon} + \frac{p}{1-p}} \right)^2\right), p \leqslant \frac{1}{2}, \\
2\exp\left(-2n(1-p)^2 \left(\frac{e^{\epsilon} - 1}{e^{\epsilon} + \frac{1-p}{p}} \right)^2\right), p > \frac{1}{2}.
\end{cases}
$$
This concludes the proof, because we have found a bound for the subset of possible values which did not satisfy our required ratio. In the end we have
$$
P(M(X) \in S) \leqslant e^{\epsilon}P(M(X') \in S)+\delta,
$$
which concludes the proof.
\end{proof}
\subsection{Proof of Theorem 2}

\begin{proof}
To prove this theorem, we will use Facts \ref{factDwork} and \ref{factBE} from Section~\ref{ssect:IND}. Let $X = \sum_{i=1}^n{X_i}$ and $\sigma^2 = \frac{\sum_{i=1}^n{\sigma_i^2}}{n}$. Let $u,v \in supp(X)$ and $|u - v| \leqslant \Delta$. For any Borel set $B$ let us denote $B_u = \{b + u: b \in B\}$. For simplicity let us, for now, assume that $EX_i = 0$ for every $i$. From assumptions we also know that $E|X_i|^3 < \infty$ for every $i$, so we can use Fact~\ref{factBE}. Let $Z \sim \mathcal{N}(0,n\sigma^2)$. For every $B_u$ we have

$$
P\left(X \in B_u\right) \leqslant P\left(Z \in B_u\right) + 2\delta_1,
$$
where $\delta_1 \leqslant \frac{0.56 \sum_{i=1}^n{E|X_i|^3}}{\left(\sum_{i=1}^n{\sigma_i^2}\right)^{\frac{3}{2}}}$ is the rate of convergence described in Fact~\ref{factBE}. Now we can use Fact~\ref{factDwork}:
$$
P\left(Z \in B_u\right) + 2\delta_1 \leqslant e^{\epsilon}P\left(Z \in B_v\right) +2\delta_1 + \delta_2.
$$
Both $\epsilon$ and $\delta_2$ are parameters from Fact~\ref{factDwork} for the normal distribution with variance $n\sigma^2$ and in case where $|u-v| \leqslant \Delta$. In particular, we can fix $\delta_2 = \frac{4}{5\sqrt{n}}$. From Fact~\ref{factDwork} we get
$$
\epsilon = \sqrt{\frac{\Delta^2\ln (n)}{n\sigma^2}}.
$$
Now we have to return to our initial distribution. Again, we use Fact~\ref{factBE}.
\begin{align*}
&e^{\epsilon}P\left(Z \in B_v\right) +2\delta_1 + \delta_2 \leqslant \\
&\leqslant e^{\epsilon}P\left(X \in B_v\right) +2\delta_1(1+e^{\epsilon}) + \delta_2.
\end{align*}
During this reasoning we already obtained $\epsilon$. We also have 
$$
\delta = 2\delta_1(1+e^{\epsilon}) + \delta_2 \leqslant \frac{1.12 \sum_{i=1}^n{E|X_i|^3}}{\left(\sum_{i=1}^n{\sigma_i^2}\right)^{\frac{3}{2}}}(1+e^{\epsilon})+\frac{4}{5\sqrt{n}}.
$$ 
Note that for simplicity we assumed $EX_i = 0$. One can easily see that for $Y_i = (X_i - \mu_i)$, where $\mu_i = EX_i$ the proof is still correct. Therefore we have
$$
\delta = 2\delta_1(1+e^{\epsilon}) + \delta_2 \leqslant \frac{1.12 \sum_{i=1}^n{E|X_i-\mu_i|^3}}{\left(\sum_{i=1}^n{\sigma_i^2}\right)^{\frac{3}{2}}}(1+e^{\epsilon})+\frac{4}{5\sqrt{n}}
$$
Finally we have
\begin{align*}
P(X \in B_u) &\leqslant e^{\epsilon}P(X \in B_v) + \delta_1(1+e^{\epsilon}) + \delta_2 \leqslant \\
&\leqslant e^{\epsilon}P(X \in B_v) + \delta,  
\end{align*}
which concludes the proof.
\end{proof}

\subsection{Proof of Theorem 3}
\begin{proof}
To prove this lemma, we use facts stated in Section~\ref{ssect:DEP}, namely Fact \ref{factStein} and \ref{thmStein}. We also use Kolmogorov and Wasserstein distances, which were defined in Section \ref{ssect:DEP} in Definition \ref{dK}. We have $X = \sum_{i=1}^n{X_i}$ and $\sigma^2 = Var(X)$. Let $u,v \in supp(X)$ and $|u -v| \leqslant \Delta$. For any Borel set $B$ let us denote $B_u = \{b + u: b \in B\}$. Moreover, throughout the proof we denote $\frac{B_u}{\sigma} = \{\frac{b}{\sigma}: b \in B_u \}$. For simplicity let us, for now, assume that $EX_i = 0$ for every $i$. Let $Z \sim \mathcal{N}(0,n\sigma^2)$. For every $B_u$ we have
$$
P\left(X \in B_u\right) = P\left(\frac{X}{\sigma} \in \frac{B_u}{\sigma}\right).
$$
Recall that we assumed $EX_i = 0$ and $EX_i^4 < \infty$.  Now let $Z \sim \mathcal{N}(0,1)$. From Fact~\ref{thmStein} we have
$$
d_W\left(\frac{X}{\sigma},Z\right) \leqslant \frac{D^2}{\sigma^3}\sum_{i=1}^n{E|X_i|^3} + \frac{D^\frac{3}{2} \sqrt{26}}{\sigma^2\sqrt{\pi}} \sqrt{\sum_{i=1}^n{EX_i^4}}.
$$
Note that for simplicity we assumed $EX_i = 0$. One can easily see that for $X^*_i = (X_i - \mu_i)$, where $\mu_i = EX_i$ the proof is still correct. We have
$$
d_W\left(\frac{X}{\sigma},Z\right) \leqslant \frac{D^2}{\sigma^3}\sum_{i=1}^n{E|X^*_i|^3} + \frac{D^\frac{3}{2} \sqrt{26}}{\sigma^2\sqrt{\pi}} \sqrt{\sum_{i=1}^n{E\left(X^*_i\right)^4}}.
$$

We can use Fact~\ref{factStein} to get Kolmogorov distance of $\frac{X}{\sigma}$ and $Z$. Namely
$$
d_K\left(\frac{X}{\sigma},Z\right) \leqslant \left(\frac{2}{\pi}\right)^{\frac{1}{4}}\sqrt{d_W\left(\frac{X}{\sigma},Z\right)}.
$$
Having Kolmogorov distance of $\frac{X}{\sigma}$ and $Z$, we can proceed further
\begin{align*}
&P\left(\frac{X}{\sigma} \in \frac{B_u}{\sigma}\right) \leqslant \\  
&\leqslant P\left(Z \in \frac{B_u}{\sigma}\right) + 2d_K\left(\frac{X}{\sigma},Z\right) = \\
&= P\left(Z\cdot \sigma \in B_u\right) + 2d_K\left(\frac{X}{\sigma},Z\right).
\end{align*}
Now we can use the property of the normal distribution stated in Fact~\ref{factDwork}.
\begin{align*}
&P\left(Z \cdot \sigma \in B_u\right) + 2\delta_1 \leqslant \\
&\leqslant e^{\epsilon}P\left(Z \cdot \sigma \in B_v\right) +2d_K\left(\frac{X}{\sigma},Z\right) + \delta_1.
\end{align*}
Both $\epsilon$ and $\delta_1$ are parameters from Fact~\ref{factDwork}, for the normal distribution with variance $\sigma^2$ and $|u-v| \leqslant \Delta$. In particular, we can fix $\delta_1 = \frac{4}{5\sqrt{n}}$. From Fact~\ref{factDwork} we get
$$
\epsilon = \sqrt{\frac{\Delta^2 \ln(n)}{\sigma^2}}.
$$
Now we have to return to our initial distribution. Again, we use Facts \ref{factStein} and \ref{thmStein}.
\begin{align*}
&e^{\epsilon}P\left(Z \cdot \sigma \in B_v\right) +2d_K\left(\frac{X}{\sigma},Z\right) + \delta_1 \leqslant \\
&\leqslant e^{\epsilon}P\left(\frac{X}{\sigma} \in \frac{B_v}{\sigma}\right) +2d_K\left(\frac{X}{\sigma},Z\right)\left(1+e^{\epsilon}\right) + \delta_1 = \\
&= e^{\epsilon}P\left(X \in B_v\right) +2d_K\left(\frac{X}{\sigma},Z\right)\left(1+e^{\epsilon}\right) + \delta_1.
\end{align*}
We already obtained $\epsilon$. We also want to find an upper bound for  $\delta = 2d_K\left(\frac{X}{\sigma},Z\right)\left(1+e^{\epsilon}\right) + \delta_2$. For this purpose we  can use previously shown inequalities concerning Kolmogorov and Wasserstein distance
\begin{align*}
\delta &= 2d_K\left(\frac{X}{\sigma},Z\right)\left(1+e^{\epsilon}\right) + \delta_1 \leqslant \\
&\leqslant 2(1+e^{\epsilon}) \left(\frac{2}{\pi}\right)^{\frac{1}{4}}\sqrt{d_W\left(\frac{X}{\sigma},Z\right)} + \frac{4}{5\sqrt{n}} \leqslant \\
&\leqslant c(\epsilon)\sqrt{\frac{D^2}{\sigma^3}\sum_{i=1}^n{E|X^*_i|^3} + \frac{D^\frac{3}{2} \sqrt{26}}{\sigma^2\sqrt{\pi}} \sqrt{\sum_{i=1}^n{E\left(X^*_i\right)^4}}} + \frac{4}{5\sqrt{n}},
\end{align*}
where
$$
c(\epsilon) = 2(1+e^{\epsilon}) \left(\frac{2}{\pi}\right)^{\frac{1}{4}}.
$$
Summing it up we obtain
\begin{align*}
P(X \in B_u) &\leqslant e^{\epsilon}P(X \in B_v) + 2d_K\left(\frac{X}{\sigma},Z\right)\left(1+e^{\epsilon}\right) + \delta_1 \leqslant \\
&\leqslant e^{\epsilon}P(X \in B_v) + \delta,  
\end{align*}
which concludes the proof. 
\end{proof}

\section{Comparison to standard \\ differential privacy}\label{sect:comparison}

Clearly noiseless privacy is an extension of the regular differential privacy from~\cite{dW1} that is 
applicable to the case when we can assume that the  observer/attacker may treat the raw data of users 
(before being processed)  as random variables. In particular if we assume that all data items are
concentrated in single points (i.e, $P(X_i =x_i)=1$ for all $i$ ) we get the original $(\varepsilon,\delta)$-differential privacy. 

While the standard differential privacy definition guarantees immunity against attacks based on \textit{auxiliary information}
(i.e., from publicly available datasets or even personal knowledge about an individual participating in the protocol), the noiseless privacy is more general as we can either assume that the adversary has no auxiliary information, or assume that there is an upper bound on the size of subset of database entries about which he has some external knowledge. Note that if we assume full auxiliary information, this renders noiseless privacy completely unacceptable, which is very intuitive, as the whole notion of adversarial uncertainty demands that the adversary does not have full knowledge. Moreover, it is often quite too pessimistic to assume that the adversary knows everything except for the single data record which privacy he wants to breach.

\begin{remark}\label{sect:rem1}
See that in the standard differential privacy definition (e.g. \cite{DworkAlgo}) we essentially want
$$
P(M(X)\in B|X = x) \leqslant e^{\epsilon} P(M(X')\in B|X'=x') + \delta,
$$
where $x$ and $x'$ are adjacent, deterministic vectors. \\
This captures the notion of neighboring databases. Our approach is indeed a relaxation of that definition, as we do not necessarily condition the data to have some fixed, deterministic value. We rather treat the data inputs as random variables. In particular, if we have $X = x$ with probability $1$ then our model collapses to standard differential privacy.
\end{remark}
\textit{Differential privacy} has some very useful properties. First of all, it is immune to post-processing, so the adversary cannot get any additional information, and consequently cannot increase the privacy loss by convoluting the result of a mechanism with some deterministic function.
\begin{fact}
Noiseless privacy is, similarly to standard \textit{differential privacy} as stated in \cite{DworkAlgo}, resilient to post-processing. The proof goes almost exactly the same way as for standard differential privacy. Let $f: R \rightarrow R'$ be a deterministic function. Let also $T = \{r \in R: f(r) \in S\}$. Now fix $S \subset R'$, privacy mechanism $M$ and a random vector $X$. We have
\begin{align*}
&P(f(M(X))\in S) = P(M(X) \in T) \leqslant \\
&\leqslant e^{\epsilon}P(M(X') \in T) + \delta = e^{\epsilon}P(f(M(X')) \in S) + \delta,
\end{align*}
which completes the proof of this remark.
\end{fact}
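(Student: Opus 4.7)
The plan is to mimic the classical post-processing argument from standard differential privacy, exploiting the fact that Definition~\ref{SDP-DEF} quantifies over \emph{all} measurable sets $B$ in the codomain of $M$. The key observation is that composing $M$ with a deterministic (measurable) $f\colon R\to R'$ preserves the output probabilities on every set $S\subseteq R'$ up to pulling back through $f$, so the $(\epsilon,\delta)$-NP inequality for $M$ on the preimage automatically upgrades to an $(\epsilon,\delta)$-NP inequality for $f\circ M$ on $S$.

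Concretely, I would proceed as follows. Fix any random vector $X$, any adjacent $X'$ in the sense of Definition~\ref{adjVec}, and any measurable $S\subseteq R'$. Define $T = f^{-1}(S) = \{r\in R : f(r)\in S\}$; since $f$ is deterministic and measurable, $T$ is a Borel set in $R$, so it is a legitimate test event in Definition~\ref{SDP-DEF}. Then $\{f(M(X))\in S\} = \{M(X)\in T\}$ as events, and likewise for $X'$, so the probabilities match. Applying $(\epsilon,\delta)$-NP of $(M,X)$ to the set $T$ yields $P(M(X)\in T) \leqslant e^{\epsilon}P(M(X')\in T) + \delta$, and rewriting each side via the $f$-preimage identity gives $P(f(M(X))\in S) \leqslant e^{\epsilon}P(f(M(X'))\in S) + \delta$. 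Since $S$ and the adjacent vector $X'$ were arbitrary, $f\circ M$ satisfies $(\epsilon,\delta)$-NP with respect to $X$.

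There is no serious obstacle here; the only thing worth flagging is measurability of $T$, which is automatic for any Borel-measurable $f$ and is implicit in calling $f$ ``deterministic'' in the usual probabilistic sense. If one wanted to extend the statement to \emph{randomized} post-processing $A\colon R\to R'$ (as is standard in the differential privacy literature), the same argument works by first conditioning on the internal randomness $\omega$ of $A$ so that $A(\,\cdot\,,\omega)$ becomes deterministic, applying the above bound pointwise in $\omega$ to the preimage $T_\omega = A(\,\cdot\,,\omega)^{-1}(S)$, and then integrating over $\omega$ using Fubini; the factor $e^\epsilon$ and the additive $\delta$ both come out of the integral unchanged.
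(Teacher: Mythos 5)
Your argument is exactly the paper's: pull $S$ back through $f$ to $T=f^{-1}(S)$, apply the $(\epsilon,\delta)$-NP inequality to the Borel set $T$, and translate back. The extra remarks on measurability of $T$ and on randomized post-processing are correct refinements but do not change the route.
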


Another important property of differential privacy is its composability. There has been an extended discussion concerning composability of noiseless privacy and its derivatives in \cite{prevDP1,prevDP2,prevDP3}.


\end{document}